\newtheorem{theorem}{Theorem}
\newtheorem{corollary}{Corollary}
\newcommand{\proof}{Proof:\ }
\newcommand{\qed}{$\Box$}
\numberwithin{equation}{section}
\title{The Swapped Dragonfly}
\begin{document}
\date{}
\maketitle
\begin{centering}
%	\large
	Richard Draper\footnote{Center for Computing Sciences, Institute for Defense Analyses, Bowie MD}
		
	rndrape@super.org,rndrape@gmail.com
	
\end{centering}
\thispagestyle{empty}
\vspace{3em}

\bibliographystyle{abbrv}
\large{

\begin{abstract}
	This paper describes the Swapped Dragonfly. It is a two-parameter family of diameter three interconnection networks, $D3(K,M)$, which are linearly scalable in $M$. Although $D3(K,M)$ is a Dragonfly, it differs from standard Dragonflies in many respects. It has a $K\times M\times M$ coordinate system $(c,d,p)$. The routers $(c,d,p)$ and $(c',p,d)$ are globally connected using a swap of $p$ and $d$.
	
	If $L<K$ and/or $N<M$, $D3(K,M)$ contains $D3(L,N)$. The coordinate system enables source vector routing on $D3(K,M)$. A source-vector induces $KM^2$ parallel paths on $D3(K,M)$. Because of this, the Swapped Dragonfly can support conflict-free parallelism over local ports, global ports, routers and source-vectors. In particular, there is an all-to-all algorithm which is not a pairwise exchange algorithm.
\end{abstract}
Keywords: interconnection network, Dragonfly network, swapped network, source-vector routing, all-to-all exchange.

\section{INTRODUCTION}

The interconnection network is an important and costly component of a modern supercomputer. It is the backbone of the system, potentially limiting or enhancing performance on parallel applications. An important example is the Dragonfly network. It is the topology of the IBM PERC \cite{Arim:Inter} and a variation of which is the network of the $CRAY XC^\copyright$ \cite{alv:CRAY} family of computers. This paper defines the Swapped Dragonfly, denoted $D3(K,M)$, and discusses its properties. The network is an interconnection network in the Dragonfly family of networks.

The objective of the research leading to this paper was to design networks that were linearly scalable of low diameter, and were useful in performing communicative primitives that occur frequently in scientific applications. These primitives are broadcast, one-to all, all-to-one, and all-to-all. The Swapped Dragonfly has diameter three and is linearly scalable. The communication primitives are developed in this paper.

Linear scaling is important for the following reason. The purchase of a large supercomputer follows a lengthy trajectory involving, among other things, planning, budgeting, contracting, delivery, and testing. Typically, the machine consists of cabinets with removable drawers which themselves are computers which (sometimes) can be upgraded. The cabinets are connected by cables. The purchasing process can take a long time during which costs and budgets change. Budgets seldom increase but costs always increase. Less money and higher costs translate into fewer cabinets. Having an interconnection network which supports machines of size $M^2, 2M^2,3M^2,\dots$ makes increasing or decreasing the size of the machine easy. This is linear scaling. 
 
The Swapped Dragonfly is organized in $K$ cabinets containing $M$ drawers each containing $M$ routers. It is denoted $D3(K,M)$. Routers have $K$ global and $M-1$ local bidirectional ports. Routers also have ports to which compute nodes are attached. The address space is $(c,d,p) \ c\bmod K$ and $(d,p) \bmod M$. The coordinate $c$ identifies a set of $M^2$ routers called a cabinet and $(c,d)$ identifies a drawer. There are two networks: a local network connecting the routers in a drawer in a complete graph and a global network connecting the drawers. $D3(K,M)$ is a Dragonfly because it is based on a building block which is a complete graph and it scales linearly. It is a swapped network because the global network is defined by a swap, $(c,d,p) \leftrightarrow (c',p,d)$. 

$D3(k,M)$ is treated as a packet switching network. Packet headers contain a counter and a source-vector. The longest path has three hops so a source-vector consists of three ports $(\gamma,\pi,\delta)$. This represents a path for every router in the network. On $D3(K,M)$ these paths are parallel. Source-vectors are used to design algorithms for broadcast, one-to-all, all-to-one, and all-to-all. Because of the parallelism of the source-vector paths, there are no inter-round conflicts in any of these algorithms. The counter is used to remove intra-round conflicts so the resulting algorithms are free of link conflicts. These algorithms exploit parallelism over local ports, global ports, routers, and source vectors. The all-to-all algorithm is unusual because it is not hierarchical and it is not a pairwise exchange algorithm. 

The Swapped Dragonfly network is intended to be visible to the user and algorithms are tailored to the topology. In fact, the network is part of the algorithm. This is analogous to Cannon's matrix multiply algorithm on a mesh architecture \cite{cann:alg}; all-to-all exchange on a hypercube; and ascend-descend algorithms on a hypercube and a cube-connected cycle \cite{prp:CCC}.

There are other useful properties of $D3(K,M)$. The cabinets of $D3(K,M)$ in a subset of $\{0,\cdots,K-1\}$ of size $J$ are connected in a subnetwork isomorphic to $D3(J,M)$. As a result, not only is $D3(K,M)$ linearly scalable in $K$ but $D3(J,M)$ can be expanded to $D3(K,M)$ without altering the original network of $D3(J,M)$. Although the swap may seem to introduce confusion in the wiring diagrams of the network, the wiring is well-defined and there is a simple and intuitive way to describe the wiring in terms of the cabinet structure of the network. 

The paper is organized as follows: Sections two through six define $D3(K,M)$ and discuss physical implementation and other properties useful in an interconnection topology. The next two sections present capabilities assumed for the routers and the synchronized source-vector routing algorithm. Section nine presents the results on communication primitives and identifies properties of the network and the routers required to achieve the results claimed. The proofs of these claims are simple but tedious. However, the reader should examine the algorithms to understand where the parallelism is. Proofs are presented in an Appendix. Section ten converts the source-vector packet headers to destination headers and uses these headers to enable deflection routing. Section eleven compares $D3(K,M)$ to a standard Dragonfly. The last three sections are a discussion of related work, a conclusion, and a bibliography.

\section{THE NETWORK D3(K,M)}

The network has $KM^2$ routers, each having $K$ global and $M-1$ local ports. We denote the network $D3(K,M)$.
Routers are addressed by
\begin{equation*}
(c ,d ,p )\ \ \textrm{with}\ \   d\ \textrm{and}\ p\ \textrm{mod}\ M,\ c\ \textrm{mod}\  K.
\end{equation*} 
The coordinates are referred to as (cabinet, drawer, router).
Connectivity is defined by
\begin{equation}\label{eq:connect}
 (c,d,p)\stackrel{l}{\longleftrightarrow} (c,d,p')\ \ \textrm{and}\  
(c,d,p)\stackrel{g}{\longleftrightarrow} (c',p,d).
\end{equation} 
The $l$ and $g$ are referred to as \emph{local} and \emph{global} communications, respectively. \emph{Note the swap of $p$ and $d$ in the global communication}. It is assumed that all links are bidirectional and that $g$ and $l$ communications can occur simultaneously. \emph{It is assumed that the $M$ routers in a drawer are connected locally in a complete graph.} Each router has K global ports denoted  $a \bmod K$. They are included in expanded notation $(c,d,p,(a ))$. Each router has $M-1$ local ports. In the algorithms of this paper, local ports are indexed $1,\cdots,M-1$. Routers are labeled from $0,\cdots,M-1$. Local port $\pi$ on router $p$ connects to local port $-\pi$ on router $p+\pi$. Arithmetic is modulo $M$. There is no local port $0$. Its reference means that the packet being routed does not move during the time in question. Three figures demonstrating the structure of the Swapped Dragonfly $D3(3,4)$ appear at the end of the Appendix.
It follows from (\ref{eq:connect}) that connecting  $(c,d,p)$ to $(c',d',p')$ may be done with one global and  two  local hops: 
\begin{equation} \label{eq:lgl} %equation 2.2
\begin{array}{crcl}
step & & path & \\
1 & (c,d,p) & \stackrel{l}{\longleftrightarrow} & (c,d,d') \\
2 & (c,d,d',(c'-c)) & \stackrel{g}{\longleftrightarrow} & (c',d',d,(c-c')) \\ 
3 & (c',d',d ) & \stackrel{l}{\longleftrightarrow} &  (c',d',p').
\end{array}
\end{equation}

If $(c',d',p') \neq (c,d,p)$ and $d' \neq p$ and $p' \neq d$, (\ref{eq:lgl}) is the shortest path and it is unique. If $d=d'$ there is a $glg$ path as well as an $lgl$ path.
There are $K$ four hop paths of the form $glgl$ and $K$ paths of the form $lglg$ that connect $(c,d,p)$ to $(c',d',p')$. There are also paths which do not fit the $lgl$ pattern.
For example, $(c,d,p)$ is connected to $(c',p,d)$ by the $gg$ path $(c,d,p)\stackrel{c'-c}{\longleftrightarrow}(c',d,p)\stackrel{0}{\longleftrightarrow}(c',d,p)$.

\section{PHYSICAL IMPLEMENTATION}
The global network has features that facilitate its  physical implementation. 
All drawers are identical. Each cabinet is a $D3(1,M)$ so all cabinets are identical. It follows from step 2 of ( \ref{eq:lgl}) that
$$ (c,d,p,(\gamma)) \stackrel{g}{ \longleftrightarrow}(c+\gamma,p,d, (-\gamma)), \textrm{ for } p = 0 ,\dots, M-1.$$

(The minus signs on the port addresses result from the bidirectional property of global communications.) $(c,d,*,(\gamma))$ denotes global port $\gamma$ on every router in drawer $(c,d)$. $(c,*,p,(\gamma))$ denotes global port $\gamma$ of router $p$ in every drawer of cabinet $c$. $(c,d,*,(\gamma))$ connects in order to $(c+\gamma,*,d,(-\gamma))$. For example, if $K=6,\  (4,5,3,(4))$ connects to $(2,3,5,(2))$ because $4+4=2 \bmod 6$ and $-4=2 \bmod 6$. If $K$-wide ribbons were used for global connections one end would attach to a column $(c + \gamma,*,d,(-\gamma))$ and the other end would attach to $(c,d,*,(\gamma))$.

\section{SUBNETWORKS AND PARTITIONS}
If $\kappa$ is a subset of $\{0,\cdots,N-1\}$ of size $K$, it is clear that the cabinets of $D3(N,M)$ with $c \in \kappa$ look like $D3(K,M)$. If $\lambda$ is a subset of $\{0,\cdots,M-1\}$ of size $L$, then $\{(c,d,p) \in D3(N,M)\, \vert\, \{d,p\} \in \lambda \time \lambda\}$ is closed undet the action of global links. Therefore, this set looks like $D3(N,L)$. Combining both constraints yields a network that looks like $D3(K,L)$ inside $D3(N,M)$.

For this to make sense and to enable the translation of a source-vector algorithm on $D3(K,L)$ to an algorithm on the associated subspace of $D3(N,M)$ requires two sets of tables. The first set associates $\gamma \in \{0,\cdots,K-1\}$ with $\gamma' \in \kappa$ at $c\in\kappa$. The second set associates $\pi \in \{0,\cdots L-1\}$ with $\pi' \in \lambda$ at $p\in \lambda.$ Requiring tables at each router to translate a vector $(\gamma,\pi,\delta)$ on $D3(K,L)$ to a vector $(\gamma',\pi',\delta')$ is not a heavy burden. This is done at a host attached to the router. The following theorem gives the details for $c$ and $\gamma$, and is followed by an example.

%Theorem 1 
\begin{theorem}\label{thm:additive part}
	Assume routers have $N$ global ports. Let $K < N$ and let
	$\kappa = \{k_0, k_1, \dots, k_{K-1}\}$ be a  subset of $\{0, \dots, N-1\}$.
	Let $a_{j,i} = k_j - k_i\!\! \mod N$ and
	$$D3(\kappa,M,N) = \{(k_i,d,p,(a_{j,i}))\ | \ 0 \leq i < K, 0 \leq j < K,\forall (d,p)\}.$$\ 
	$D3(\kappa,M,N)$, a subnetwork of $D3(N,M)$, is isomorphic to $D3(K,M)$.
\end{theorem}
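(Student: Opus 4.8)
The plan is to exhibit an explicit isomorphism $\phi\colon D3(K,M)\to D3(\kappa,M,N)$ and check that it respects both families of links in \eqref{eq:connect}. On routers I would define $\phi(i,d,p)=(k_i,d,p)$, where $i\in\{0,\dots,K-1\}$ is the cabinet coordinate of $D3(K,M)$ and the drawer/router coordinates $(d,p)$ are carried over unchanged. Since $i\mapsto k_i$ is a bijection of $\{0,\dots,K-1\}$ onto $\kappa$ and $(d,p)$ are untouched, $\phi$ is at once a bijection between the $KM^2$ routers of $D3(K,M)$ and those of $D3(\kappa,M,N)$.

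The substantive content is the correspondence on global ports. By the physical-implementation relation, global port $\gamma$ at cabinet $i$ reaches cabinet $i+\gamma \bmod K$; accordingly I would send port $\gamma$ at $(i,d,p)$ to global port $a_{j,i}$ of $(k_i,d,p)$, where $j\equiv i+\gamma \pmod K$. First I would verify this is well defined and bijective onto the retained ports at each cabinet: for fixed $i$ the residues $a_{j,i}=k_j-k_i \bmod N$ are distinct as $j$ runs over $\{0,\dots,K-1\}$ (since the $k_j$ are distinct), so $\gamma\mapsto a_{j,i}$ is a bijection of $\{0,\dots,K-1\}$ onto exactly the port set $\{a_{j,i}\}$ defining $D3(\kappa,M,N)$ at cabinet $k_i$; moreover $\gamma=0$ gives $j=i$ and $a_{i,i}=0$, so the no-move port is preserved.

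With $\phi$ fixed, edge preservation is two short checks against \eqref{eq:connect}. Local links involve only coordinates inside a single drawer, which $\phi$ copies verbatim, so $(i,d,p)\stackrel{l}{\leftrightarrow}(i,d,p')$ maps to $(k_i,d,p)\stackrel{l}{\leftrightarrow}(k_i,d,p')$ and the local port labels $p'-p$ are unchanged. For a global link, a port-$\gamma$ edge joins $(i,d,p)$ and $(i+\gamma,p,d)$ with the $p\leftrightarrow d$ swap of \eqref{eq:connect} (reductions mod $K$); writing $j\equiv i+\gamma \pmod K$ and applying $\phi$ gives the pair $(k_i,d,p)$ and $(k_j,p,d)$, which by the physical-implementation relation are joined in $D3(N,M)$ through global port $k_j-k_i=a_{j,i}$, precisely the image port assigned above and one of the ports retained in $D3(\kappa,M,N)$. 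The swap is respected, so global links map to global links; reading the argument backwards shows every global link of $D3(\kappa,M,N)$ arises this way.

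I expect the only genuine obstacle to be bookkeeping rather than mathematics: cabinet indices live modulo $K$ while the global-port labels $a_{j,i}$ live modulo $N$, so I would take care that the two reductions are never conflated—in particular that $j\equiv i+\gamma \pmod K$ is formed mod $K$ before $a_{j,i}=k_j-k_i$ is reduced mod $N$. Once the vertex bijection, the port bijection, and the two link checks are assembled, $\phi$ together with its inverse port table ($a_{j,i}\mapsto\gamma\equiv j-i \pmod K$ at cabinet $k_i$) is exactly the relabeling described before the theorem, so it realizes the claimed isomorphism $D3(K,M)\cong D3(\kappa,M,N)$.
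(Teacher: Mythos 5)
Your proof is correct and follows essentially the same route as the paper: you associate cabinet $i$ with $k_i$ and global port $\gamma = j-i \bmod K$ with $a_{j,i}$, then verify that the swap-defined global links correspond, which is exactly the paper's argument. The only difference is that you spell out the bijectivity of the port map and the (trivial) preservation of local links, details the paper leaves implicit.
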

\proof
Associate $i$ with $k_i$ and $j-i\!\! \mod K$ with $a_{j,i}$. Then 
%\begin{equation*}
\begin{align*}
(k_i,d,p,(a_{j,i})) &\stackrel{g}{\longleftrightarrow} (k_j,p,d,(a_{i,j}))\\
\intertext{corresponds to}
(i,d,p,(j-i\!\! \mod K)) &\stackrel{g}{\longleftrightarrow} (j,p,d,(i-j\!\! \mod K)).
\end{align*} 
An analogous proof applies when $d$ and $p$ are restricted to a subset of $\{0,\dots,M-1\}$.
\qed
\newline
It follows from this theorem that $D3(N,M)$ contains non-intersecting subnetworks, $D3(K_i,M) $, based upon any partition of the set 
$\{0, \dots, N-1\}$. This property also makes it possible to scale a network with $K$ cabinets up to  any size less than or equal to $N$ at only the 
added cost of the additional cabinets, their global wiring, and their global wiring into the original network. No existing wiring need be moved provided that the original network was wired as $D3(\kappa,M,N)$. 

The table below identifies the cabinets and global ports used to create a $D3(4,M)$ and a complementary $D3(5,M)$ inside $D3(9,M)$.

\begin{equation*}
\begin{array}{cccccccc}
i& k_i & ports &&&i & k_i& ports \\
0&1&\{0,1,4,7\}&&&0&0&\{0,3,4,6,7\}\\
1&2&\{3,0,6,8\}&&&1&3&\{8,0,1,3,4\}\\
2&5&\{5,6,0,3\}&&&2&4&\{5,8,0,2,3\}\\
3&8&\{3,6,5,0\}&&&3&6&\{3,6,7,0,1\}\\
&&&&&4&7&\{2,5,6,8,0\}
\end{array}
\end{equation*}
Each cabinet $i$ of $D3(4,M)$ is assigned to $k_i$ on $D3(9,M)$. The vector of ports is assigned to each cabinet. To convert a $D3(4,M)$ vector $(\gamma,\pi,\delta)$ at $(i,d,p)$ to a $D3(9,M)$ vector $(\gamma',\pi,\delta)$ at $(k_i,d,p)$,$\gamma'$ is the element $\gamma$ of vector $i$.
Both local and global subnetworks can be used to create isolated subnetworks of type $D3$. Equally important, both subnetworks  can be used for maintenance. If drawer $(c,d,p)$ needs to be replaced, then maintenance can be performed while the network  $D3(K,M-1)$ is operating. This takes $K(2M-1)$ routers off-line but leaves a working network with topology in the same family as the full network.   In the second case a cabinet can be taken off line leaving a $D3(K-1,M)$ running. This takes $M^2$ routers off line.

\begin{corollary}
	$D3(K,M)$ has a cutset of size $\min(K^2M^2/2,KM^3/2).$\\
\end{corollary}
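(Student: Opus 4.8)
The plan is to exhibit two explicit disconnecting edge sets and to observe that the smaller of the two realizes the claimed bound. The starting point is the edge classification already built into the definition of the network in Section~2: every link of $D3(K,M)$ is either a \emph{local} edge $(c,d,p)\longleftrightarrow(c,d,p')$ or a \emph{global} edge $(c,d,p)\longleftrightarrow(c',p,d)$, so the local set $E_l$ and the global set $E_g$ partition the edges. I would then argue that severing an entire one of these two networks disconnects the machine, and compare the two costs.

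The first step is a count. Each of the $KM^2$ routers carries $K$ global ports and $M-1$ local ports, and each edge is shared by its two endpoints, so $|E_g|=K^2M^2/2$ and $|E_l|=KM^2(M-1)/2$, the latter being $KM^3/2$ to leading order. The only blemish is that the ports with $\gamma=0$ and $d=p$ are self-loops rather than genuine edges; these contribute a lower-order correction of size $KM/2$, which the stated bound absorbs. The second step is to show that each of $E_g$ and $E_l$ is a cutset. Removing $E_g$ leaves the local-only graph, which by the assumption that each drawer is a complete graph is the disjoint union of the $KM$ drawer-cliques $K_M$; this has more than one component once $M\ge2$, so $E_g$ is a cutset of size $K^2M^2/2$. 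Removing $E_l$ leaves the global-only graph, and here the swap is the crux: from $(c,d,p)$ the global edges reach only the vertices $(c',p,d)$, and from any such vertex only the vertices $(c'',d,p)$, so the component of $(c,d,p)$ is confined to $\{(c,d,p)\ :\ c\}\cup\{(c,p,d)\ :\ c\}$, a set of at most $2K$ routers. Since $2K<KM^2$ for $M\ge2$, the global-only graph breaks into $\binom{M}{2}$ components of size $2K$ together with $M$ diagonal components (those with $d=p$) of size $K$; hence $E_l$ is a cutset of size $KM^2(M-1)/2$.

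Combining the two constructions, $D3(K,M)$ has disconnecting cuts of sizes $|E_g|=K^2M^2/2$ and $|E_l|\approx KM^3/2$, and severing whichever network is cheaper yields a cutset of size $\min(K^2M^2/2,\,KM^3/2)=(KM^2/2)\min(K,M)$; the global network is the cheaper cut exactly when $K<M$. The one step that demands care, and which I expect to be the main obstacle, is the disconnection of the global-only graph: it rests entirely on the $p\leftrightarrow d$ swap confining each orbit to a fixed unordered pair $\{d,p\}$, so I would make it rigorous by verifying that the vertex set displayed above is closed under the global adjacency relation and is internally connected, which identifies it as a full connected component and thereby forces many components overall.
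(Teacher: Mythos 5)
Your proof is correct, but it takes a genuinely different route from the paper's. The paper bisects the network: it splits the $K$ cabinets into two halves, each of which is a $D3(K/2,M)$ by the subnetwork theorem, and cuts the $K/2$ global ports per router that cross the bisection (and analogously splits the local index set $\{0,\dots,M-1\}$ to isolate two $D3(K,M/2)$ pieces); the figure $K^2M^2/2$ arises there as the count $KM^2\cdot K/2$ of denied ports. You instead delete an entire edge class, all of $E_g$ or all of $E_l$, and verify disconnection directly by exhibiting the components: the $KM$ drawer-cliques in one case, and in the other the swap orbits $\{(c,d,p):c\}\cup\{(c,p,d):c\}$ of size at most $2K$, whose closure under global adjacency is indeed the one point that needs checking and which you handle correctly. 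Your route buys an unconditional argument (no evenness of $K$ or $M$ is needed, whereas the paper's bisection implicitly assumes it), an explicit component structure, and edge counts that honestly match the stated figures. What it gives up is economy and structure: the bisection cut severs only the crossing edges --- about half as many --- and leaves behind two functioning sub-Dragonflies, which is precisely what the surrounding discussion of partitions and maintenance is after, whereas your cut shatters the graph into $KM$ or $\binom{M}{2}+M$ pieces. Both arguments establish the claimed bound $\min(K^2M^2/2,\,KM^3/2)$.
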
	
\proof
Partition $D3(K,M)$ into $D3(K/2,M)$. Every router is denied access to $K/2$ global ports. This gives a cutset of size $K^2M^2/2$. Partitioning into $D3(K,M/2)$ yields the other value.
\qed

\section{GLOBAL LINK CONFLICTS}
Transparency of link conflicts and an upper bound on their cost are two of the features of these networks. 

%Theorem 2
\begin{theorem}\label{thm:conflict}
	Given simultaneous transmissions
	$$
	(c,d,p)\rightarrow (c',d',p') \ \textrm{and}\ (\gamma,\delta,\pi) \rightarrow (\gamma',\delta',\pi')%\label{thm:conflict}
	$$
	with $(c',d',p') \neq (\gamma',\delta',\pi')$,
	link conflict occurs on minimal paths if  and only if $(c,d) = (\gamma,\delta)$ and  $(c',d') = (\gamma',\delta')$. 
\end{theorem}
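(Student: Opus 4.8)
The plan is to reduce the statement to a comparison of the single global link traversed by each minimal path. By the routing in (\ref{eq:lgl}), the minimal path $(c,d,p)\to(c',d',p')$ makes exactly one global hop, across the link $g_1$ whose endpoints are $(c,d,d')$ and $(c',d',d)$, and the packet crosses $g_1$ in the direction of travel, entering at $(c,d,d')$ and leaving at $(c',d',d)$. The second transmission likewise crosses a single global link $g_2$ with endpoints $(\gamma,\delta,\delta')$ and $(\gamma',\delta',\delta)$, entering at $(\gamma,\delta,\delta')$. Since all links are bidirectional (Section 2), two packets conflict on a global link exactly when they cross the same physical link in the same direction; crossings in opposite directions share the wire but not a channel and cause no conflict. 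Thus the theorem reduces to: $g_1$ and $g_2$ are the same link crossed in the same direction if and only if $(c,d)=(\gamma,\delta)$ and $(c',d')=(\gamma',\delta')$.

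For the forward implication I would substitute $(\gamma,\delta)=(c,d)$ and $(\gamma',\delta')=(c',d')$ into the endpoints of $g_2$; they become $(c,d,d')$ and $(c',d',d)$, so $g_2=g_1$ and the entering endpoints coincide, giving identical crossings. This is a genuine conflict between distinct flows because the hypothesis $(c',d',p')\neq(\gamma',\delta',\pi')$ forces $p'\neq\pi'$ once the drawers agree.

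For the converse I would assume a conflict, so $g_1=g_2$ with matching direction, and equate entering endpoint to entering endpoint and leaving endpoint to leaving endpoint: $(c,d,d')=(\gamma,\delta,\delta')$ and $(c',d',d)=(\gamma',\delta',\delta)$. Reading off the cabinet and drawer coordinates of each equation yields precisely $(c,d)=(\gamma,\delta)$ and $(c',d')=(\gamma',\delta')$. The step that needs the most care --- and the main obstacle --- is justifying that only this direction-preserving identification is allowed. Knowing merely that $g_1$ and $g_2$ coincide as unordered edges leaves a second matching, $(c,d,d')=(\gamma',\delta',\delta)$ together with $(c',d',d)=(\gamma,\delta,\delta')$; unwinding it shows the source drawer of one transmission equals the destination drawer of the other, i.e. the two packets run along the shared wire in opposite directions. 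Bidirectionality classifies this as a non-conflict, so the matching is correctly discarded; this is exactly the content the hypothesis and the bidirectional assumption are buying. The remaining checks --- that the canonical $lgl$ path has a unique global hop and that the distinct-destination hypothesis excludes the degenerate self-conflict --- are routine.
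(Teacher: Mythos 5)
Your computation is the same one the paper performs: both arguments come down to the fact that the global hop of the first path is the directed link from $(c,d,d')$ to $(c',d',d)$, that of the second is the directed link from $(\gamma,\delta,\delta')$ to $(\gamma',\delta',\delta)$, and these coincide (with matching orientation) exactly when $(c,d)=(\gamma,\delta)$ and $(c',d')=(\gamma',\delta')$. Your forward direction is fine, and your discarding of the reversed matching by appealing to bidirectionality is the right way to handle the one case you identified as delicate.

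The genuine gap is in the reduction itself, not in the endpoint-matching. You assert that ``the theorem reduces to'' a statement about the global links, but the \emph{only if} direction of the theorem requires showing that no conflict can occur on either of the two \emph{local} hops, and this is nowhere argued; nor does it appear among the ``remaining checks'' you list, which concern only the uniqueness of the global hop and the degenerate self-conflict. The check is short but it is real content: a directed local link is determined by its source router and port, so the step-one links $(c,d,p)\rightarrow(c,d,d')$ and $(\gamma,\delta,\pi)\rightarrow(\gamma,\delta,\delta')$ can collide only if the two transmissions originate at the same router --- a case the paper silently excludes by asserting that the first routers of the two paths are distinct, and which you must exclude too, since otherwise a common source with $d'=\delta'$ but $c'\neq\gamma'$ produces a local conflict that violates the stated equivalence; and the step-three links $(c',d',d)\rightarrow(c',d',p')$ and $(\gamma',\delta',\delta)\rightarrow(\gamma',\delta',\pi')$ can collide only if $(c',d',p')=(\gamma',\delta',\pi')$, contradicting the hypothesis. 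The paper covers exactly this ground by casing on which intermediate routers coincide, i.e.\ on conditions (\ref{eq:sec}) and (\ref{eq:third}), and verifying that a shared intermediate router still receives and sends on distinct ports. Once you add the corresponding paragraph, your proof is correct and is essentially the paper's argument reorganized around the directed global link rather than around coincidences of intermediate routers.
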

\begin{proof}   
	We apply \ref{eq:lgl} to two routers:
	\begin{equation*}
		\begin{split}
			&(c,d,p) \stackrel{l}{\rightarrow} (c,d,d')
			\stackrel{g}{\rightarrow} (c',d',d,)
			\stackrel{l}{\rightarrow} (c',d',p').\\ 
			&(\gamma,\delta,\pi) \stackrel{l}{\rightarrow}(\gamma,\delta,\delta') 
			\stackrel{g}{\rightarrow} (\gamma',\delta',\delta)
			\stackrel{l}{\rightarrow}(\gamma',\delta',\pi') 
		\end{split}
	\end{equation*}
	In each line, four routers occur.  The first and fourth routers are distinct.  
	The second routers are identical if and only if
	\begin{equation}
		c- \gamma = d- \delta = d' - \delta' =0 \label{eq:sec}
	\end{equation}
	and the third routers are identical if and only if
	\begin{equation}
		c' - \gamma' = d' -\delta' = d-\delta=0.\label{eq:third}
	\end{equation}

	If only \ref{eq:sec} holds, then the communication is
	\begin{equation*}
		\begin{split}
			&(c,d,p) \stackrel{l}{\rightarrow} (c,d,d')
			\stackrel{g}{\rightarrow} (c',d',d)
			\stackrel{l}{\rightarrow} (c',d',p')\\
			&(c,d,\pi) \stackrel{l}{\rightarrow} (c,d,d')
			\stackrel{g}{\rightarrow} (\gamma',d',d)
			\stackrel{l}{\rightarrow} (\gamma',d',\pi').\\
		\end{split}
	\end{equation*}
	This router conflict is not a problem because router $(c,d,d')$ is receiving packets over distinct local ports
	and sending packets out over distinct global ports. If only \ref{eq:third} holds, then the communication is
	\begin{equation*}
		\begin{split}
			&(c,d,p) \stackrel{l}{\rightarrow} (c,d,d')
			\stackrel{g}{\rightarrow} (c',d',d,)
			\stackrel{l}{\rightarrow} (c',d',p')\\
			&(\gamma,d,\pi) \stackrel{l}{\rightarrow} (\gamma,d,d')
			\stackrel{g}{\rightarrow} (c',d',d,)
			\stackrel{l}{\rightarrow} (c',d',\pi').\\
		\end{split}
	\end{equation*}
	Analogous to the previous case, $(c',d',d)$ is receiving packets over distinct global ports
	and sending packets out over distinct local ports.

	If  both \ref{eq:sec} and \ref{eq:third}  hold then the communication is
	\begin{equation}
		\begin{split}
			&(c,d,p) \stackrel{l}{\rightarrow} (c,d,d')
			\stackrel{g}{\rightarrow} (c',d',d,)
			\stackrel{l}{\rightarrow} (c',d',p')\\
			&(c,d,\pi) \stackrel{l}{\rightarrow} (c,d,d') \label{eq:both}
			\stackrel{g}{\rightarrow} (c',d',d,)
			\stackrel{l}{\rightarrow} (c',d',\pi').
		\end{split}
	\end{equation}
	Router $(c,d,d')$ has to send two packets out over the same port.
\end{proof} 

The conflict  has a physical interpretation.
Two routers on one drawer are sending packets to two routers on another drawer.
The global communication in \ref{eq:both}  has two messages to send out over one and the same port.
In the worst case, it could have $M$ messages to send out over one port. In programing a loop over the address parameters of the routers, 
having pairs of drawers simultaneously exchange packets could be a  natural thing to do. On the other hand, knowing the source of link contention can be used to mitigate its effect. 

\section{LINEAR AND QUADRATIC SCALING}
This system scales linearly in the number of cabinets which is desirable for several reasons. Manufacturers of massively parallel computers start with a router with a fixed number of ports. Product lines will use this router for several years \cite{alv:CRAY}. It is desirable for the product line to have several size machines. The machines are often built to order. Linear scaling means that the product line can contain an arithmetic progression of sizes up to the maximum possible when using all of the ports on a router. Large machines are typically delivered over a significant period of time. Users work on the 
early deliveries to understand computational power and programming issues. A two cabinet $D3(2,M)$ machine provides an accurate programming image of the $K$-cabinet $D3(K,M)$. If budgets scale at all, it is linearly.  More realistically, budgets are often cut or cost estimates exceeded before a machine is delivered. Reducing the value of $K$ is a graceful way of dealing with 
these realities.

The term \emph{scalable} can be interpreted in another way. The second interpretation is
that routers with $N$ global ports are used to build a system with $K<N$ cabinets which subsequently may be
expanded to a system with  $N$ cabinets. A system with $K<N$ leaves the 
extra global ports unused. One would like the ability to build a machine out of uniform cabinets, routers, and drawers which can be expanded
with minimal wasted cost and effort to a larger machine. To put the bigger machine together, some of the global network wiring in the original
machine will have to be changed unless the original machine was wired as $D3(\kappa,M,N)$. See the discussion of the example in Section 4. That being the case, the expanded machine will not require rewiring the existing machine. This is called \emph{graceful} scaling.

The system also scales quadratically.  If each drawer has $M$ slots for routers, but contains only $N<M$ routers then the system $D3(K,N)$ can be built. The routers must occupy the same numerical position in every 
drawer.  The resulting system has $KN^2$ routers. This scaling calls for a fixed number of cabinets which are only partially 
populated. Like linear scaling, adding routers to the drawers can be done without disturbing existing cables.

\section{ROUTER CAPABILITIES}

A router has $K$ global ports and $M-1$ local ports\footnote{Modular arithmetic is used to design the algorithm of this paper. Consequently, the address space for local ports includes zero, so reference maybe made to $M$ local ports.}. It is desirable that the local and global networks are balanced. Letting $B$ denote $\max\{K,M\}$, balance means that the a router can simultaneously send and receive $B/K $  messages over each of  its global ports and $B/M$ messages over each of its local ports.  Consequently, a router can send and receive $2B$ messages. We refer to the time to do this as a \emph{time step}. However, it is very difficult to prove anything about performance on communication primitives if this form of balance is assumed. Therefore, it will be assumed that \emph{the time to send a message over a global link is the same as the time to  send a message over a local link}. This is a \emph{time step}. With this definition of balance \emph{we assume that a router can simultaneously send and receive $K$ messages over its global ports and $M-1$ messages over its local ports in one time step.} It is also assumed that a router can broadcast a packet out of all its ports in one time step. Modern routers can do this.

Sending a packet through a router and across an attached link is called a \emph{hop} or a network hop. The performance of a one round algorithm is measured in hops. The performance of a multi-round algorithm is measured in number of rounds and intra-round delays. The sum of these two numbers is within a few (startup) hops of the number of hops required to run the algorithm provided that the rounds can be pipelined. 

\section{ROUTING}

Routing can be implemented by a packet header consisting of the destination address. To transmit a packet, the difference between 
sending and receiving addresses must be calculated. 
Given that information, the sending router must decide what to do with the packet. The procedure is usually conducted by table look-up. This routing  has the advantage that a packet which is detoured because of a congested or broken link can wander about 
and eventually reach its destination.

This paper uses source-vector routing with a synchronizing counter. The packet header has fields $(b;\gamma,\pi,\delta)$; $b$ is a counter, $\gamma$ is a global port and $\pi \textrm{ and } \delta$ are local ports. The value of $b$ determines which of the three ports are used. The evolution of a path is:
\begin{equation*}
\begin{array}{rcc}
& router &  header\\ 
& (c,d,p) &  (3;\gamma,\pi,\delta)\\
\delta &\downarrow & \downarrow\\
& (c,d,p+\delta) & (2;\gamma,\pi ,\delta)\\
\gamma &\downarrow & \downarrow\\
&(c+\gamma,p+\delta,d) & (1;\gamma,\pi,\delta)\\
\pi &\downarrow & \downarrow\\
&(c+\gamma,p+\delta,d+\pi)  & (0;  \gamma,\pi,\delta).
\end{array}
\end{equation*}
The packet is initiated by a node attached to $(c,d,p)$. When $b=0$ the packet has arrived and is passed to a node attached to $(c+\gamma,p+\delta,d+\pi)$. The path from $(c,d,p)$ to $(c',d',p')$ uses header $(3;c'-c,p'-d,d'-p)$. The sending router modifies the counter so that the receiving router knows what to do with the packet.

For example, $(c,d,p)$ sending a packet to itself is implemented by the header $(3;0,p-d,d-p)$ which requires 
three hops to reach $(0;0,0,0)$ even if $p = d$. This header induces a three  step path to stand still. This may seem absurd, but if a permutation is being transmitted by the network, there will be routers which talk to themselves. Taking three hops to do so will synchronize the entire 
permutation. Effectively, the headers create a geometry in which every router is three hops away. 

Synchronized source-vector routing is used to design accelerated algorithms for the communication primitives: broadcast \footnote{A broadcast bit is added to the header in implementing the broadcast algorithm}, one-to-all, all-to-one, and all-to-all. Parallel vector flows across the network are used to pipeline the algorithms. The fact that all paths are of length three is crucial to synchronizing these algorithms. The results are stated in \S9 \ and the proofs are in the Appendix.

In the case that drawer-to-drawer communication cannot be avoided, the resulting conflict can be avoided on $D3(K,M)$. A four hop path $(4;\gamma,\pi,\delta)$ is a $glgl$ path $(\gamma,\pi,0,\delta)$. The initial $\gamma$ converts routers in a drawer to routers in a column. This eliminates drawer-to-drawer conflicts.

A drawer is a complete graph on $M$ nodes so uses only $M-1$ ports. The algorithms in this paper refer to local port $0$. When local port $0$ is used, it means that a packet does not move for one step of the counter. The router must have the ability to hold a packet for that step.

\section{COMMUNICATION PRIMITIVES}
It is customary to measure theoretical performance of an algorithm in time steps. The algorithms of this paper consist of a series of parallel rounds. There are no inter-round link conflicts. If the rounds are pipelined, there are intra-round conflicts. These conflicts are resolved by delaying a round. Therefore, the performance of a pipelined algorithm is proportional to the number of rounds plus the number of delays. This is the total number of time steps. The time to complete a single round is measured in network hops. This and startup time are not recorded here.  

Assume $M$ is even and no less than $4$. The performance of communication primitives on $D3(K,M)$ is listed below. The conflicts are the result of pipelining. Detailed statements of the algorithms and proofs are presented in the Appendix. In a round some or all routers receive one or more packets from an attached node.

\begin{enumerate}
	\item An all-to-all exchange can be performed in $KM^2$ rounds with $KM$ intra-round conflicts.
	\item A router $(c,d,p)$ can perform a one-to-all in 
	\begin{enumerate}
		\item $2KM$ rounds with $M$ intra-round conflicts if $d=p$.
		\item $KM$ rounds if $p \neq d$.
	\end{enumerate}
	\item A router $(c,d,p)$ can perform an all-to-one in $KM$ rounds if $d \ne p$.
	\item A router $(c,d,p)$ can perform a broadcast in three hops. It can perform $N$ broadcasts in $N$ rounds if $d \neq p$ and in $2N$ rounds if $d=p$.\footnote{Broadcast differs from the other primitives because a router cannot duplicate packets. Therefore, the packets have to hop off and on the network in order to be duplicated and assigned the outgoing port. This feature makes it easy to coordinate. Therefore, $N$ rounds are be pipelined in $N+2$ hops.} 
	\item $D3(K,M)$ can perform a permutation in $M$ hops.
\end{enumerate}

All of these algorithms use parallelism and contain no inter-round link conflicts. The algorithms are controlled by the sync counter and/or the broadcast bit. As stated before, the time to cross a link is assumed constant. However, that may be because the sync counter controls the time.The result in $9.1$ uses parallelization over all routers in $D3(K,M)$. The results in $9.2a$ and $9.2b$ use parallelization across local ports. The algorithm in $9.3$ combines successive broadcasts with synchronized responses by $M$ routers at a time. The broadcast requires that a router can pass a message to all of its local or all of its global ports simultaneously when the broadcast bit is set.

Here are the demands placed upon the network, routers and nodes by each of these results. All require that the routers can be synchronized.
\begin{enumerate}
	\item A packet with $\pi$ or $\delta$ equal $0$ can be held in the router for a time-step.
	\item A compute node can launch $M$ packets simultaneously.
	\item A router can receive a packet and send it simultaneously out all $K$ global ports or all $M$ local ports.
\end{enumerate}
For these algorithms to perform in time proportional to the number of rounds, they have to be synchronized and protected from interference by unrelated packets.

\section{HEADERS AND DEFLECTION ROUTING}
The packet headers in \S8 \ are useful for finding parallel flows through the network. That is the key to efficient implementation of the communication primitives in \S9. Source-vector headers can be replaced with headers containing the destination of the packet. Doing this makes it possible to introduce adaptive deflection routing to the system. The new headers contains $(b;destination,location)$ denoted $(b;(c',d',p'),(c,d,p))$. The location is updated after each hop of the path.

Routing can be handled by table lookup. There are two port tables, local and global
$$
\begin{array}{lclcllll}
\begin{array}{lcll}
\textrm{local } \ M \times M && \textrm{global } \ K \times K\\

\ \ \ \ \ \cdots a \cdots&&\ \  \ \ \ \ \cdots a \cdots\\
\vdots&&\vdots\\
b \ \  \  \ \ \ \ \ x&& b \ \ \ \ \ \ \ \ \ y\\
\vdots &&\vdots
\end{array}
\end{array}
$$

With $x=b-a \bmod M \textrm{ and } y=b-a \bmod K$, the row entry is taken from the destination and the column entry is taken from the location. The sync counter controls what is read out of the tables. Given the packet header, the array below gives the table look-up determined by the sync counter.
$$
\begin{array}{lccc}
(3;(c',d',p'),(c,d,p)) & \textrm{ go to } & (d',p) \textrm{ local }\\
(2;(c',d',p'),(c,d,p)) & \textrm{ go to } & (c',c) \textrm{ global }\\
(1;(c',d',p'),(c,d,p)) & \textrm{ go to } & (p',d) \textrm{ local }
\end{array}
$$

The diagonal of both tables is $0$. In the case of the global table there is a global port $0$. But in the case of the local port, there is no $0$ port. Because a drawer is a complete graph on $M$ vertices, the local ports are properly labeled $1$ to $M-1$. A path using the $0$ local port means the packet does not move. Its sync counter must step down. The packet has to be held somewhere in the router for that step before being moved to a global port buffer or the arrival buffer.

Destination headers make it easy to introduce deflection routing.The range of the counter $b$ is increased to $5$. The router has to do more than a table lookup.
$$
\begin{array}{lccc}
(5;(c',d',p'),(c,d,p)) & \textrm{ takes random local port } D\\
(4;(c',d',p'),(c,d,D)) & \textrm{ takes random global port } C
\end{array}
$$
After these two steps the packet is at $(C,D,d)$ and has $b=3$ so goes to its destination. There are $M \ glgl $ paths from $(c,d,p)$ to $(c',d',p')$. This is a version of UGAL-G \cite{kim:dragonfly}. If only $b=4$ were added it would be a version of Valiant deflection routing. $D$ and $C$ need not be random but may be selected based on local conditions. $D$ depends on the state of local ports on router $(c,d,p)$. $C$ depends on the state of the global ports on drawer $(c,d)$. The decision to deflect is made when the packet is launched.

This form of deflection routing is not possible if source-vector routing is used. The problem is that the vector $(\gamma,\pi,\delta)$ takes source $(c,d,p)$ to $(c+\gamma,p+\delta,d+\pi)$. A deflection takes $(c,d,p)$ to $(C,D,d)$. The vector needed to reach the destination is  $(c+\gamma-C,d+\pi-D,p+\delta,d+\pi)$. Routers would have to make the calculation.

The deflection routing technique proposed here provides an additional opportunity if the deflection is managed by an attached node. Given destination $(c',d',p')$ at source $(c,d,p)$, the deflection header $(4;c',d',p')$ with non-random $C=c'-c$ leads to a $glgl$ path: $$(c,d,p) \stackrel{g}{\longrightarrow} (c',p,d) \stackrel{l}{\longrightarrow}(c',d,d') \stackrel{g}{\longrightarrow}(c',d',d) \stackrel{l}{\longrightarrow}(c',d',p')$$ The path starts with a jump to the destination cabinet $c'$. The header contains the destination $(c',d',p')$, and $b=3$ after the first hop. Therefore, the packet stays in cabinet $c'$ and goes to its destination. This is a $glgl$ path which is determined by $(c,d,p) \textrm{ and } (c',d',p')$. This path can be used to design an algorithm parallel over global ports. For example, a one-to-all algorithm can be designed which runs in time proportional to $M^2 \textrm{ if }p \neq d$. These headers can also be used to construct a dilation four embedding of a hypercube of size $2^{k+2m} \leq KM^2$ in $D3(K,M)$ 

The algorithm used for the all-to-all exchange is a program run by a node attached to each router. It exhausts over the space of source vectors. In order to determine the contents of a packet sent along vector $(\gamma,\pi,\delta)$, node $(c,d,p)$ has to compute destination $(c+\gamma,p+\delta,d+\pi)$ of the packet. Therefore, the calculation is equivalent to converting the packet header to a destination header. Using destination routing makes it possible to have both an accelerated all-to-all and deflection routing. It requires that the router contains look-up tables and it may require that the router can choose a $D$ and $C$. This adds to the complexity of the router. However, nodes are attached to drawer $(c,d)$ at router $p$. If the nodes maintain information on traffic on the drawer to which they are attached, then a node can choose $D$ and $C$ at the time a packer is launched. This would keep the router simple.

\section{COMPARISON OF D3(K,M) AND THE DRAGONFLY} 
A Dragonfly network \cite{kim:dragonfly} has groups (drawers) of $M$ routers connected by $M-1$ local ports in a complete graph. Each router has $K$ global ports. A \emph{maximal} Dragonfly, denoted here by $MDF(K,M)$, has $KM+1$ groups connected in a complete graph by global ports. It is the largest diameter three Dragonfly that can be made using the drawers and routers specified. The performance of this Dragonfly is studied in \cite{kim:dragonfly}, \cite{besta:slimfly}, \cite{garcia:routing}.

The relations between $D3(K,M)$ and $MDF(K,M)$ are provided below:

\begin{table}[h!]
	\begin{center}
		\caption{$D3(K,M)\ \  MDF(K,M)$ comparisons}
		\label{tab:table1}

		\begin{tabular}{|l|c|c|} % <-- Alignments: 1st column left, 2nd middle and 3rd right, with vertical lines in between
			\hline
			\textbf{Property} & \textbf{$D3(K,M)$} & \textbf{$MDF(K,M)$}\\
			\hline \hline
			1 Drawers & $KM$ & $KM+1$\\
			\hline
			2 Fixed Points& Yes &  No\\
			\hline
			3 Scales& Gracefully & Gracefully\\
			\hline
			4 Parallelism over local and global ports & Yes & Not always possible\\
			\hline
			5 $L < K \ \& \ N < M$ & $D3(L,N) \subset D3(K,M)$ & $D3(L,N) \subset D3(K,N)$ \\
			\hline
			6 Global Connectivity & Determined & Many alternatives\\
			\hline
			7 Vector-source Routing & Can be used for parallelism & Not always possible\\
			\hline
		\end{tabular}
	\end{center}
\end{table}

Items $4, 5,$ and $7$ require explanation. The connectivity of $MDF(K,M)$ may be determined by a $KM+1\times KM$ table. The rows are indexed by the drawers and the columns are indexed by (routers, ports). The entries of the table are drawers. The entry at $(d,(p,\gamma))$ is the drawer $d'$ reached by port $\gamma$ of $(d,p)$. In row $d'$, $d$ appears and at the the top of its column is $(p',\gamma ')$ which means the link from $(d,p,(\gamma))$ connects to $(d',p',(\gamma '))$. The only restraint on the table is that row $d$ contain every drawer $d'$ except itself.There is only one table which produces an $MDF$ having property 5 and it is the only way that the Dragonflies are being built. In order for source vector routing to be possible on a Dragonfly, global ports have to permute the set of drawers. The table being used to design machines has the property that each global port maps all groups to only two groups. So no existing Dragonfly supports source-vector routing.

Here are properties common to Swapped Dragonflies and maximal Dragonflies.
\begin{enumerate}
	\item Any pair of routers can be connected by a path containing at most one global hop.
	\item Given routers with radix $r$, the radix devoted to global/versus local can be adjusted in response to the relative cost of long versus short connections.
	\item Deflection routing can be used to avoid conflicts, thereby reducing average latency.
	\item The system UGAL can be used to obtain information about queue lengths that would inform the choice of intermediate routers in a deflection path.
\end{enumerate}

$D3(K,M)$ is a Dragonfly. Any issue associated with long global links is no different than it is for $MDF(K,M)$. On random traffic,the Swapped Dragonfly has the same performance as any Dragonfly of (approximately) the same size. In \cite{kim:dragonfly} results are reported for a Dragonfly with $M=8, K=4$ and $4$ terminals (compute nodes) attached to each router. The network $D3(K,M)$ is of size $256$ and the studies presented apply. In the paper \cite{besta:slimfly}, numerous empirical results are given for the Dragonfly architecture. The parameters are not stated. The network with $N_r = 1452$ in \S 5 has $K=M=11$. $D3(11,11)$ has $1331$ routers. Results given are pertinent to $D3(11,11)$. The paper \cite{garcia:routing} is pertinent to $D3(16,8)$.

\section{RELATED WORK}

There is a vast literature focused on interconnection networks for supercomputers. However, it appears that most of 
it is focused on networks that scale geometrically  rather than linearly. This came about because much of the 
original work focused on networks that implemented algorithms which scaled geometrically.  The research   
was stimulated by the appearance of the CM1  \cite{hill:CM1}, a highly parallel machine with a hypercube  interconnection network. 
At that time, the high degree of the nodes of a hypercube was a problem. This stimulated work such as the 
cube-connected cycle of \cite{prp:CCC}. The cube-connected cycle opened the nodes of a hypercube with a cycle reducing the 
degree of the nodes to three. Had the authors of \cite{prp:CCC}, Preparata and Vuillemin, opened the nodes with a complete graph, they would have invented a Dragonfly in 1981.Other work compared performance on the hypercube to performance on butterfly and other networks. A compendium 
of such results appears in Leighton \cite{lei:book}.  

Much of the research on interconnection networks (a.k.a., topology) was done during the 80's and 90's. 
Since that time, it appears to this author that there has been much less work on supercomputer networks. However, two bodies of work are closely related to the work in this paper. Each start with a building block graph which is extended by some means. This building block corresponds to a drawer.

The first body of work \cite{mar:OTIS} dates to 1993. 
A graph $(V,E)$ of order $n$ is the building block. 
A new graph of order $n^2$, called an \emph{OTIS network}, is defined using the swap $(d,p) \leftrightarrow (p,d)$. 
Subsequent work on these networks was done in
\cite{ost:sort,rarj:OTISsort,xao:biswap}.   
The use of the swap  was discovered, apparently independently, in 1996 \cite{yeh:unify}   where the resulting network 
is called a \emph{swapped network}.
The swapped network with nucleus $G$ is our single cabinet network $D3(1,M)$ provided that the building block $G$ 
is a complete graph of order $M$. The paper \cite{yeh:recurswap}  uses the swap recursively to create a hierarchy of networks. 
The orders of the graphs in the  resulting hierarchy grow geometrically rather than linearly. 
Additional work on swapped networks appears in \cite{yeh:unify, par1:swapprop, par2:swapdef,yeh:recurswap}. 

There is disagreement over whose use of the swap takes precedence\footnote{See Section 3 of \cite{par1:swapprop} and \cite{par2:swapdef}.}. Certainly this work does not. This contribution is the addition of the cabinet coordinate $c$ which creates linear scaling and constant diameter for all members of the family.

Swapped networks are not vertex transitive because the node $(p,p)$ is left fixed by the swap. Disappointed with this fact, the authors of \cite{xao:biswap} defined bi-swapped networks. A bi-swapped  network also is defined based upon a nucleus graph $G$. It is a subgraph of the two cabinet graph  $D3(2,M)$ if the nucleus of the bi-swapped graph  is a complete graph of order $M$. 
The authors chose to expand hierarchically.  Had they expanded linearly, they would have defined $D3$ graphs. Their paper  compares bi-swapped networks to swapped and OTIS networks.

The second body of work was motivated by the availability of inexpensive high density electrical interconnect and expensive but fast optical interconnect.The first paper in this work appeared in 2002 \cite{gup:SOENet}. The resulting network is called a Scalable Opto-Electronic Network (SOENet). It, too, is a linearly\footnote{The author refers to it as economically scalable.} scalable family of  networks built from uniform multi-router building blocks.  The authors did not focus on a low diameter network but on linear scalability and cost. A SOENet connects the subnetworks to a switching fabric with optical connections. The resulting system scales linearly but diameter is a step function.

There does not seem to be much follow-up on the name SOENet in the literature. However, it is referenced in and almost surely influenced the next paper \cite{kim:dragonfly} which appeared in 2008. This paper  does not employ the swap but does extend one graph by another, including the 
case that both are complete graphs.\footnote{See section 3.10 of \cite{kim:dragonfly}} The resulting graph is called a \emph{ Dragonfly}. Analogous to the research that followed the CM1, the authors of the Dragonfly were making a collection of routers in a group behave like a very high radix router. A group is the same as a drawer and the collection of global ports is joined by the local network to simulate a global router with $MK$ ports.

The Dragonfly is clearly the most important of the networks discussed here. The IBM PERC \cite{Arim:Inter} system uses a Dragonfly topology. It is the network of the $\text{CRAY XC}^\copyright$ according to \cite{alv:CRAY}. Both the $\text{CRAY XC}^\copyright$ and $D3(K,M)$ scale  linearly up to a technology determined bound. 
The CRAY terminology  differs from the terminology in this paper and from \cite{kim:dragonfly}. A chassis has sixteen routers 
connected electrically as a complete graph. Three  chassis are  electrically wired in a group, which physically occupies two cabinets and is diameter two. 
Groups are then connected with fiber optic cable. The result has diameter at least five. Very large machines have diameter greater than five.  
The correspondence between the terms in this paper and CRAY's terms is imperfect. A chassis corresponds to a drawer. A group corresponds to a cabinet. The optical or global network corresponds to a global network except for one thing. The drawer to drawer connections on a cabinet are part of the global network whereas they are not part of the optical network in the CRAY terminology. If $D3(K,M)$ were defined on a drawer which was of diameter two, the 
correspondence would be closer. Although the interconnection network has strong similarities to $D3(K,M)$, the implementation 
in the $\text{CRAY XC}^\copyright$ employs a form of Valiant randomized routing to reduce congestion. Therefore, there is no analysis of congestion free all-to-all exchange.

A network called Slimfly is a diameter two network, which is highly desirable. It is designed using an idea exploited by graph theorists to approach the Moore Bound for graphs of  given degree and node count \cite{McKay:Moore Bound}. Slimfly uses a finite field in its construction which determines the size of the graph. Finite fields have size $p^k$ for primes $p$ and powers $k$ so there are lots of choices for size. However, there is no linear scalable family of Slimflies which is a disadvantage. 

\section{CONCLUSION}

The Swapped Dragonfly is a linearly scalable family of networks. The networks are built form identical subnets and have diameter three. There is an intuitive as well as technical description of the global network connections. $D3(K,M) \subset D3(J,N)$ if $J\leqq K$ and $L \leqq M$. This has several ramifications. A $D3(J,M)$ can be expanded to a $D3(K,M)$ without disturbing the wiring of $D3(J,M)$. A drawer can be removed for repair leaving a $D3(K,M-1)$ running. The Swapped Dragonfly is capable of parallelism over local ports, global ports, routers, and vector paths. This property distinguishes the Swapped Dragonfly from the Dragonfly. 

This paper has studied $D3(K,M)$ as a packet switching network. In order to accelerate communication primitives, four ideas are used: source-vector routing, pipelining, synchronization, and the swap. None of these ideas are original but the use here of the swap is original. Because of the swap, if every router of a drawer sends a packet $(3;\gamma,\pi,\delta)$ there are no link conflicts. The paths are "parallel" in the sense that they do not cause link conflicts. They do have to be simultaneous and the synchronization counter is used to ensure that property. Finally, pipelining is used to compress the time to run an entire algorithm. There are no inter-round conflicts  and intra-round conflicts are resolved by a one round delay in the source program. The result is the timings for communication primitives displayed below:

\begin{enumerate}
	\item $(c,d,p)$:$\ \ $ $N$ broadcasts in $N$ rounds.
	\item $(c,p,p)$:$\ \ $ $N$ broadcasts in $2N$ rounds.
	\item $(c,d,p)$:$\ \ $ one-to-all in $\min(M^2, KM)$ rounds.
	\item $(c,p,p)$:$\ \ $ one-to-all in $KM$ rounds with $M$ intra-round conflicts.
	\item $(c,d,p)$:$\ \ $ all-to-one in $KM$ rounds.
	\item $D3(K,M)$:$\ \ $ all-to-all in $KM^2$ rounds with $KM$ intra-round conflicts.
	\item Permutation in $M$ network hops.
\end{enumerate}

The algorithms producing these results are presented and fully analyzed in the appendix. They use parallelism over local ports, global ports, and all routers. The paper shows that source-vector routing makes designing efficient parallel algorithms easy. Source-vector routing also distinguishes the Swapped Dragonfly from the Dragonfly.

There are ways in which this model of network routing could be implemented. It would be necessary to have two levels of service, standard and privileged. Privileged packets would take precedence and could not be impeded. 
It is also possible to use destination routing. This makes forms of deflection routing and systems like UGAL possible. The communication primitives can be converted to destination routing. Rounds will be free of link conflicts but the analysis of intra-round conflicts will not apply. 

There are other results which are developed in papers that are in preparation. If $K$ and $M$ have a common factor $S$, then there is a doubly-parallel algorithm that performs an all-to-all in $KM^2/S$ rounds with $KM$ inter-round conflicts \cite{dra:all-to-all}. If $K=L^2$, a vector-matrix product can be performed in four network hops and a matrix product can be performed in $KM$ rounds \cite{dra:vecmat}. If $K=2^k$ and $M=2^m$, $D3(K,M)$ is a diameter three wiring of the $(k+2m)$-dimension binary hypercube. It contains a dilation two simulation of the hypercube which makes ascend-descend algorithms possible \cite{dra:hyper}.

\section{Acknowledgments}
The author would like to thank Bill Carlson, Duncan Roweth, and Patricia Draper for many helpful discussions, and Robert Mroskey of the Laboratory for Physical Sciences in Catonsville has simulated $D3(K,M)$. Research for this paper was supported by The Center for Computing Sciences.

{\raggedright

}

\newpage

\begin{center}
	\normalsize\bfseries\MakeUppercase{APPENDIX}
\end{center}

This Appendix contains proofs of the claims made in \S9 and three figures demonstrating the structure of $D3(3,4)$. Each proof is an algorithm to be run by a compute node attached to each router.  The algorithm consists of a series of packet headers to be launched by the node. Sometimes it is necessary to insert a delay to avoid link conflicts among packets of the algorithm. The delays are the responsibility of the attached node. They are denoted in the algorithms and proofs by a "false" packet header with $b \ge 1$ and vector $(0,0,0)$. In displays showing the evolution of an algorithm the delays are indicated by repeating the router address $(c,d,p)$. Proofs consist of verifying that no link conflicts occur.

The performance of $D3(K,M)$ on communication primitives: broadcast, one-to-all, all-to-one, all-to-all, and permutation is determined. Throughout this section $KM^2$ is even. The analysis assumes that the network is balanced as described in $\S 7$. Using synchronizing headers, the first four primitives can be carried out without any link conflicts\footnote{The discussions in this section and $\S 9$ are stated in terms of $D3(K,M)$. They apply to $D3(\kappa,M,N)$ via the isomorphism in Theorem 1. Where arguments below use the ordering of the parameter $c$, the ordering of $k_i$ would be based upon $i$.}. Headers have the form $(B,b;\gamma,\pi,\delta)$. $B$ is the broadcast bit and $b$ is the synchronizing bit. 

There are several protocols for which pipelining of communication is used. Each is implemented by a list of headers, which are applied in order. The list is then iterated with entries in the headers being modified.

\begin{equation*}\label{eq:proc1}
\begin{array}{c}
\textrm{Protocol $ 1$}\\
\begin{array}{cccccccc}
\textrm{step} & 1 & 2 & 3 & 4 & 5 \\
& l & g & l &&&& \\
&   & l & g & l &&& \\
\textrm{rnds}    &   &   & l & g & l && 
\end{array}
\end{array}
\end{equation*}

\begin{equation*}\label{eq:prot2}
\begin{array}{c}
\textrm{Protocol $ 2$}\\
\begin{array}{cccccccccc}
\textrm{step} & 1 & 2 & 3 & 4 & 5 & 6 & 7 \\
& l & g & l &  &&&&&\\
&   & l & g & l &  &&&&\\
\textrm{rnds}&   &   &   & l & g & l &&&\\
&   &   &   &   & l & g & l &&
\end{array}
\end{array}
\end{equation*}

\begin{equation*}\label{eq:prot3}
\begin{array}{c}
\textrm{Protocol $ 3$}\\
\begin{array}{cccccccccc}
\textrm{step} & 1 & 2 & 3 & 4 & 5 & 6 & 7 & 8 & 9\\
& l & g & l &  &&&&&\\
&   & l & g & l &  &&&&\\
\textrm{rnds}&   &   &   && l & g & l &&\\
&   &   &   &  && l & g & l &
\end{array}
\end{array}
\end{equation*}

The first protocol is a unit cost protocol; $N$ rounds with no delays. The second is a $3/2$ unit cost protocol; $N$ rounds with $N/2$ delays. The third is a $2$ unit cost protocol; $N$ rounds with $N$ delays. It is clear that there can be no intra-round link conflicts in Protocol $3$ because the global and local networks are independent. The second Protocol leads to a local link conflict at step $4$ if the local ports at rounds $2$ and $3$ are the same. If conflicts are resolved by last-in first-out the the conflict causes no delay to the algorithm.

%Theorem 4
\setcounter{theorem}{3}
\begin{theorem}
	A router $(c,d,p)$ of $D3(K,M)$ can perform a broadcast in three hops. It can perform N broadcasts in $N$ rounds if $d \neq p$ and in $2N$ rounds if $d = p$.
\end{theorem}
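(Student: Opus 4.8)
The plan is to realize the broadcast as a three-stage local--global--local flooding in which every router, on receiving the packet, hands it to its attached node, which re-injects it out of all ports of one type (as permitted by the router capabilities in \S7). First I would describe a single broadcast from $(c,d,p)$: in hop one the source floods all $M-1$ local ports, populating every router $(c,d,q)$ of drawer $(c,d)$; in hop two each such router floods all $K$ global ports, and since global port $\gamma$ carries $(c,d,q)$ to $(c+\gamma,q,d)$, the packet reaches exactly the routers $(c',q,d)$ for every $c'$ and every $q$; in hop three each of these floods its own drawer locally, reaching $(c',q,r)$ for every $r$. Ranging over $c',q,r$ this is all $KM^2$ routers, so one broadcast completes in three hops. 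This coverage check is the only place the swap enters, and it is immediate.

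Next I would analyze the pipeline under Protocol~1. The key bookkeeping is that at each step exactly three broadcasts are in flight, performing hops one, two, and three; hop two is the only stage touching global links and only one broadcast is ever in hop two, so global links never collide. Local links are touched only by hop one (out of the source $(c,d,p)$) and by hop three. In drawer $(c,d)$ the unique hop-three router is $(c,d,d)$ (the level-two node $(c',q,d)$ with $c'=c$, $q=d$); every other hop-three flood lives in a distinct drawer and, since only one broadcast is in hop three at a time, meets nothing. Hence within a step the only candidate local collision is between the source $(c,d,p)$ flooding its ports and $(c,d,d)$ flooding its ports. When $p\neq d$ these are distinct routers: the single shared edge $\{(c,d,p),(c,d,d)\}$ carries the two floods in opposite directions (allowed by the full-duplex links of \S7), and all remaining edges are disjoint, so there is no conflict and $N$ broadcasts pipeline in $N$ rounds (plus the two-step fill).

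The case $p=d$ is where the obstruction lives and is the crux. Here the source $(c,d,d)$ is a fixed point of the swap and reappears as the hop-three router of its own drawer, so the source node must perform \emph{two} local floods per broadcast, one for hop one and one for hop three, while it can issue only one local flood per step. Because local and global transmissions run concurrently, the intervening hop-two (global) flood costs nothing on the local network, so the local-flood throughput of $(c,d,d)$ is exactly the bottleneck and forces at least $2N$ rounds. To show $2N$ suffices I would invoke Protocol~3, i.e. process broadcasts in tight pairs separated by a gap: this places the source's hop-one floods at steps $1,2,5,6,\dots$ and its matching hop-three floods at steps $3,4,7,8,\dots$, which are all distinct and honor the two-step hop-one-to-hop-three latency, packing all $2N$ local floods of $(c,d,d)$ into $2N$ consecutive steps.

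The main obstacle I anticipate is precisely this $p=d$ scheduling. Proving $2N$ is tight requires both the load argument at the fixed-point router and a conflict-free packing that interleaves the hop-one and hop-three duties of that one router while respecting the fixed latency; the tedious part is then checking that the same schedule leaves the hop-two global floods pairwise non-overlapping and the hop-three floods in the other drawers mutually conflict-free. By contrast the $p\neq d$ case collapses to the single observation that the source and its drawer's hop-three router $(c,d,d)$ are distinct, so I expect that direction to be routine.
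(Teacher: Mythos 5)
Your proposal is correct and follows the paper's own argument essentially verbatim: the $lgl$ flood covers all $KM^2$ routers in three hops, the only pipelining hazard under Protocol 1 is the step-three local collision inside drawer $(c,d)$ between round $i$'s hop-three flood out of $(c,d,d)$ and round $i+2$'s hop-one flood out of $(c,d,p)$, which disappears exactly when $p\neq d$, and the $d=p$ case is handled by Protocol 3's paired-broadcasts-with-gaps schedule giving $2N$ time steps. The one departure is your claim that $2N$ is \emph{forced}; the theorem asserts only an upper bound, and your load argument is not airtight as a lower bound (the hop-three flood of $(c,d,d)$ into its own drawer is redundant for coverage, since that drawer was already reached at hop one), but this extra claim does not affect the correctness of what you actually need to prove.
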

\proof
The algorithm for a broadcast is the single instruction $(c,d,p)(1,3;0,0,0)$. The first bit is the broadcast bit.
The evolution of the algorithm is:
\begin{equation*}
\begin{array}{rcc}
&router &  header\\
&(c,d,p)& (1,3;0,0,0)\\
&(c ,d,*)&(1,2;0,0,0)\\
&(*,*,d)&(1,1;0,0,0)\\ 
&(*,*,*) &(1,0;0,0,0)\\
\end{array}
\end{equation*}
At the end of three hops, this sequence delivers $(c,d,p)$'s message exactly once to every router\footnote{The message arrives at some routers before $b=0$. However, they do not read the message until $b=0$. This is critical for the all-to-one algorithm}.

This algorithm can be chained using Protocol 1. Assume $(c,d,p)$ has N messages to broadcast and $p \neq d$. 
The chained algorithm is:
\begin{align*}
&\textrm{For $i = 1,\dots,N$,}\\
&\phantom{MM}(c,d,p)(1,3;0,0,0) \phantom{MM}\textrm{message $i$}\\
&\textrm{end for.}
\end{align*}
Without displaying the header, the effect of this loop is:
\begin{equation*}
\begin{array}{lcccc}
&&&\ \textrm{rnd} &\\
&1 &(c,d,p) && \\
&2 &(c,d,*) &(c,d,p) &\\
\textrm{Step}  &3 &(*,*,d) &(c,d,*)  &(c,d,p) \\
&4 &(*,*,*) &(*,*,d) &(c,d,*) \\ 
&5 & &(*,*,*) &(*,*,d) \\
&6 & & &(*,*,*) \\
\end{array}
\end{equation*}

At step 2, rounds $1$ and $2$ use global and local networks, respectively, so there is no conflict. At step $3$, rounds $1$ and $3$ conflict if $d = p$.
Therefore it is necessary to use Protocol $3$, which is implemented by the following algorithm:
\begin{align*}
&\textrm{For $i = 1,3,5,\dots,2n + 1$}\\
&\phantom{MM}(c,p,p)(1,3;0,0,0) \phantom{MM}\textrm{message $i$}  \\
&\phantom{MM}(c,p,p)(1,3;0,0,0) \phantom{MM}\textrm{message $i + 1$} \\
&\phantom{MM}(c,p,p)(0,1;0,0,0) \phantom{MM}\textrm{no message} \\
&\phantom{MM}(c,p,p)(0,1;0,0,0) \phantom{MM}\textrm{no message} \\ 
&\textrm{end for}
\end{align*}

The evolution of this algorithm, without showing headers is:

\begin{equation*}
\begin{array}{lcccccc}
& 1 &(c,p,p) &&&&\\
& 2 &(c,p,*) &(c,p,p) &&&\\
& 3 &(*,*,p) &(c,p,*)  &(c,p,p) &&\\
\textrm{Step} & 4 &(*,*,*) &(*,*,p) &(c,p,p) &(c,p,p) &\\ 
& 5 & &(*,*,*) &(c,p,p) &(c,p,p) & (c,p,p)\\
& 6 &&& (c,p,p) &(c,p,p) &(c,p,*)
\end{array}
\end{equation*}
There is no conflict at step $5$ because round $2$ is receiving while round $3$ is sending. Protocol $3$ delivers $N$ broadcasts from $(c,d,p)$ in $N$ rounds with $N$ intra-round delays.
\qed
Routers cannot duplicate values so each round of the broadcast involves stepping off and on the network. This synchronizes the rounds.

In the next algorithm, the statement $(c,d,p)(0,3;\gamma,\pi,*)$ means that $(c,d,p)$simultaneously sends $(0,3;\gamma,\pi,\delta)$ for all $\delta$. The expression $Lgl$ means that the first step is parallel.

%Theorem 5
\begin{theorem}
	A one-to-all communication can be performed in $KM$ rounds if $p\neq d$ and in $KM$ rounds with $M$ intra-round delays if $p= d$.
\end{theorem}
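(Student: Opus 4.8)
The plan is to realize the one-to-all as an exhaustion over source-vectors, organized so that each round launches one full local ``fan'' of packets. Recall from \S8 that the header $(3;\gamma,\pi,\delta)$ carries a packet from $(c,d,p)$ to $(c+\gamma,p+\delta,d+\pi)$. The algorithm loops over the $KM$ pairs $(\gamma,\pi)$ with $\gamma \in \{0,\dots,K-1\}$ and $\pi \in \{0,\dots,M-1\}$, and in the round indexed by $(\gamma,\pi)$ issues the parallel instruction $(c,d,p)(0,3;\gamma,\pi,*)$, i.e.\ the node launches the $M$ headers $(0,3;\gamma,\pi,\delta)$, one on each local port $\delta$, simultaneously. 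First I would verify correctness: as $\delta$ runs over $\{0,\dots,M-1\}$ the round deposits packets at $(c+\gamma,\,p+\delta,\,d+\pi)$, which is one fixed router position $d+\pi$ in every drawer of cabinet $c+\gamma$; letting $(\gamma,\pi)$ range over all $KM$ values, the sets $\{(c+\gamma,p+\delta,d+\pi)\}$ partition the $KM^2$ routers, so every destination is hit exactly once and the round count is $KM$.

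Next I would check that a single $Lgl$ round is internally conflict-free. The parallel first step $L$ sends out the $M$ distinct local ports of router $p$; the global step $g$ moves the $\delta$-packet out of the distinct router $(c,d,p+\delta)$ on the common port $\gamma$, so the $M$ global links used are physically distinct; the final local step $l$ moves the $\delta$-packet inside the distinct drawer $(c+\gamma,p+\delta)$ on port $\pi$, again distinct links. Hence the $M$ paths of one round never collide.

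The main work, and the source of the $p=d$ exception, is the pipeline analysis. Using Protocol $1$ the only simultaneously-local steps are the final $l$ of round $i$ and the initial $L$ of round $i+2$. Round $i+2$ occupies exactly the links incident to router $p$ in drawer $(c,d)$, while round $i$ occupies links in drawers $(c+\gamma_i,\,p+\delta)$. A shared drawer forces $\gamma_i=0$ and $\delta=d-p$, so only the single packet of round $i$ riding $\delta=d-p$ can meet round $i+2$, on the link $\{d,\,d+\pi_i\}$. When $p\neq d$ this link meets router $p$ only if $\pi_i=p-d$, in which case round $i$ traverses $\{p,d\}$ in the direction $d\to p$ while round $i+2$ traverses it $p\to d$; since the balance assumption of \S7 lets each local port send and receive in the same time step, these opposite-direction uses do not conflict, and the $KM$ rounds pipeline with no delay.

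When $p=d$ the offending packet is the $\delta=0$ packet, which (because $\gamma_i=0$ is the in-cabinet swap fixing $(c,p,p)$) stands still and then leaves router $p$ on port $\pi_i$, the very port round $i+2$ is using to launch, producing a genuine same-link, same-direction collision. I expect this to be the crux of the argument: there are exactly $M$ such rounds (one per $\pi_i$, all with $\gamma=0$), each resolved by a one-round delay, which yields $KM$ rounds with $M$ intra-round delays. The delicate points I would watch are the bidirectional send/receive convention that rescues the $p\neq d$ case and the self-referential behavior of the $\gamma=0$ swap on the fixed point $(c,p,p)$, which is precisely what breaks it when $p=d$.
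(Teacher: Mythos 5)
Your proposal is correct and follows essentially the same route as the paper: the same loop of $KM$ rounds of $(c,d,p)(0,3;\gamma,\pi,*)$ under Protocol 1, the same observation that only the final $l$ of round $i$ and the initial $L$ of round $i+2$ can collide, and the same identification of the $\gamma=0$, $d=p$ rounds as the $M$ places needing a delay. The only difference is that you spell out why the $d\neq p$, $\pi_i=p-d$ subcase is harmless (opposite directions on a bidirectional link), a detail the paper compresses into ``if $d\neq p$ there is no conflict.''
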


\proof

\textrm{Let $i = \pi + \gamma M$}
\begin{align*}
&\textrm{For $i = 1,\dots,KM-1$}\\
&\ \ \ \ \ \ \ (c,d,p)(0.3;\gamma,\pi,*)\\
&\textrm{end for}
\end{align*}

%%%%%%%%%%%%%%%%%%%%%%%%%%%%%%%%%%%%%%%%%
At the first step of a round $M$ messages are scattered to the $M$ routers of drawer $d$. At the second step they move to column $d$ of cabinet $c+\gamma$. The algorithm evolves as 
\begin{equation} %\label{eq:lgl}
\begin{array}{crclclclcl}
rnd & 1 &2& 3& 4 & 5&step & \\
i & L & g & l \\
i+1 && L & g & l \\ 
i+2 &&& L & g &  l.
\end{array}
\end{equation} 
At step $2$ of round $1$ the drawer is sent to a column so there are no conflicts within a round.
At step $3$ rounds $i$ and $i+2$ are at

$$
\begin{array}{cclclc}
&i&(c+\gamma,*,d)& \stackrel{\pi}\longrightarrow&(c+\gamma,*,d+\pi)\\
&i+2&(c,d,p)& \stackrel{*}\longrightarrow&(c,d,*).
\end{array}
$$
If $d \neq p$, there is no conflict. If $d=p$ and $\gamma \neq 0$, there is no conflict. But if $d=p$ and $\gamma = 0$, link $\pi$ is carrying two packets. In that case a delay is necessary. This happens for $M$ values of $\pi$. The algorithm has to be modified appropriately.
\qed

An all-to-one communication is performed by having the sink node send a sequence of broadcasts requesting responses from $M$ nodes at a time. The protocol is:
\begin{equation*}
\begin{array}{llllllllllll}
& L& G& L& D& l& g& l&&&&\\
&& L& G& L& D& l& g& l&&
\end{array}
\end{equation*}

where capital letters represent broadcasts and $D$ is a delay.

%Theorem 6
\begin{theorem}
	The sink $(c,d,p)$ \  can perform an all-to-one communication in $KM$ rounds if $d\neq p$.
\end{theorem}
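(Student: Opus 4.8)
The plan is to present the all-to-one as the time-reversal of the one-to-all of Theorem 5. I would partition the $KM^2$ sources into $KM$ groups indexed by $(\gamma,\pi)$, group $(\gamma,\pi)$ being the $M$ routers $R_\delta=(c+\gamma,p+\delta,d+\pi)$ for $\delta=0,\dots,M-1$; these are exactly the routers the sink reaches in round $i=\pi+\gamma M$ of the one-to-all, so the groups tile the network. Since the forward vector from the sink to $R_\delta$ is $(\gamma,\pi,\delta)$, the routing rule gives the reverse vector $(-\gamma,-\delta,-\pi)$, and substitution shows it carries $R_\delta$ back along
\[ R_\delta\stackrel{l}{\longrightarrow}(c+\gamma,p+\delta,d)\stackrel{g}{\longrightarrow}(c,d,p+\delta)\stackrel{l}{\longrightarrow}(c,d,p). \]
The hypothesis $d\neq p$ is the mirror of the hypothesis $p\neq d$ that makes Theorem 5 conflict-free, and I expect it to play the dual role.

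First I would clear the intra-round conflicts. Within one group the $M$ packets sit in $M$ distinct drawers $(c+\gamma,p+\delta)$ at step $1$, leave on $M$ distinct global links at step $2$, and at step $3$ converge on the sink over the $M$ distinct local ports $-\delta$ (with $\delta=0$ using port $0$); since a router receives on all local ports at once, a group delivers its $M$ packets with no conflict, needing no hypothesis on $d$ and $p$. For the data flow across rounds I would invoke a reversal principle: because every link is bidirectional and $g$ and $l$ transmissions are simultaneous, a conflict-free synchronized schedule stays conflict-free when time is reversed and each send is swapped with its matching receive. Applying this to the conflict-free one-to-all (legitimate since $d\neq p$) turns its $KM$ scatter rounds into $KM$ gather rounds realizing exactly the reverse paths above, so the responses are conflict-free for free.

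What remains, and what I expect to be the main obstacle, is that the sources must be \emph{told} to answer: the sink issues the triggering broadcasts of the protocol $L\,G\,L\,D\,l\,g\,l$, and I must show these do not collide with the in-flight responses once the rounds are pipelined. I would handle this with two observations. The requests travel \emph{outward} from the sink's drawer while the responses travel \emph{inward} to it, so on any shared link they occupy opposite directions and therefore coexist; and the one-step pipelining offset together with the delay $D$ separates a group's request arrival from its response departure, leaving at most one broadcast and one response on the global network or in the sink's drawer at any step. The verification is then a finite check: at an arbitrary step, list these operations and confirm that every shared physical link is used either in opposite directions or at distinct steps. This is where $d\neq p$ enters, keeping the sink's own router $(c,d,p)$ distinct from the relay $(c,d,d)$ that refills the sink's drawer during a broadcast, so that the outward broadcast and the inward responses never coincide in the same direction on a local link of that drawer. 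Counting the $KM$ groups then yields the stated $KM$ rounds.
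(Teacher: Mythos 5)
Your overall architecture matches the paper's: you tile the $KM^2$ sources into $KM$ columns of $M$ routers indexed by $(\gamma,\pi)$, trigger each column with a broadcast from the sink, have the $M$ responders return along the reverse $lgl$ paths (your reverse-vector computation and the intra-round analysis are correct), and pipeline the rounds under the protocol $L\,G\,L\,D\,l\,g\,l$. Your one genuinely different ingredient is deriving conflict-freeness of the response flow by time-reversing the one-to-all of Theorem 5; that is a sound and rather clean substitute for re-verifying the gather paths, since reversing a synchronized schedule on full-duplex links preserves the property that no directed link carries two packets at once.

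The gap is that the step you yourself identify as "the main obstacle" is asserted rather than proved, and the heuristic you offer for it is false. With a one-step offset between rounds, at a generic step there are \emph{three} broadcasts in flight (at hops $1$, $2$, $3$ of rounds $i+2k$, $i+1+2k$, $i+2k$ for various $k$) and up to three responses in flight, not "at most one broadcast and one response." The dangerous collisions are not request-versus-response (those are indeed oppositely directed, as you say) but request-versus-request and response-versus-response on local links: e.g.\ at step $3$, round $i$'s third broadcast hop emanates from router $d$ of every drawer while round $i+2$'s first broadcast hop emanates from router $p$ of drawer $(c,d)$, and these collide on the link $\{d,p\}$ in the same direction precisely when $d=p$; similarly at step $5$ round $i$'s first response hop $(\gamma,*,\pi)\rightarrow(\gamma,*,d)$ meets round $i+2$'s third broadcast hop $(*,*,d)\rightarrow(*,*,*)$ on the link $\{\pi,d\}$, which is harmless only because when $\pi=d$ that response hop is a router-internal delay. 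The paper's proof consists exactly of this step-by-step enumeration (steps $3$, $5$, $6$, $7$), and it is where the hypothesis $d\neq p$ actually does its work. Your proposal names the right destination but does not make the trip: you would need to write out the at most seven concurrent operations at a generic step and check each shared directed link, replacing the incorrect "one broadcast, one response" premise.
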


\proof \textrm{Let $i=\pi+\gamma M$}
\begin{align*}
&\textrm{For $0 \leq i < KM$}\\
&\phantom{MMMM}(c,d,p)(1,3;0,0,0 ) \ \textrm{message}(0,3;\ \gamma,\delta,\pi)\\
&\textrm{end for}.
\end{align*}
Receiving node $(c',d',p')$ interprets the message as the request to send message to $(c,d,p)$ if $c'=\gamma \textrm{ and } p'=\pi$
$$(\gamma,d',\pi)(0,3;c-\gamma,p-\delta,d-\pi).$$ The value of $(\gamma,\pi)$ is determined by $i$. The result is that $M$ nodes send their message to $(c,d,p)$ at each step of $i$.
The evolution of the algorithm is:

\begin{equation*}
\begin{array}{clcccc}
&rnd& i&i+1&i+2\\ 
step&&&&&\\
&&(c,d,p) &&&\\
1& & L\downarrow&&&\\
&&(c,d,*) & (c,d,p)&&\\
2& & G\downarrow&L\downarrow&&\\
&&(*,*,d) &(c,d,*) & (c,d,p)&\\
3& & L\downarrow& G\downarrow&L\downarrow&\\
&&(*,*,*) &(*,*,d) &(c,d,*)&\\
4& & D\downarrow& L\downarrow&G\downarrow&\\
&&(\gamma,*,\pi) &(*,*,*) &(*,*,d) &\\
5& & l\downarrow& D\downarrow& L\downarrow&\\
&&(\gamma,*,d) & (\gamma',*,\pi') &(*,*,*)&\\
6& & g\downarrow& l\downarrow& D\downarrow&\\
&&(c,d,*) & (\gamma',*,d)& (\gamma ",*,\pi")&\\
7&& l\downarrow& g\downarrow&l\downarrow&\\
&&(c,d,p)&(c,d,*) &(\gamma",*,d)&\\
8& && l\downarrow& g\downarrow&\\
&&&(c,d,p)&(c,d,*)\\
9&&&&l\downarrow\\
&&&&(c,d,p).
\end{array}
\end{equation*}

Each column is doing $LGLDlgl$. The delay $D$ allows the attached node to launch the return packet if $(c',p') = (\gamma,\pi)$. There is no conflict at step $3$ because $d \neq p$. There is no conflict at step $5$ because if $\pi = d$, step $5$ of round $i$ is a router delay. There is no conflict at step $6$ because if $\pi' = d$, step $6$ of round $i+1$ is a router delay. There is no conflict at step $7$ because $d \neq p$. The sink $(c,d,p)$ receives $M$ messages every step after step $6$ so the algorithm takes time $KM+6$.
\qed

An analogous algorithm parallel over global ports performs the one-to-all in $M^2$ rounds if $d \ne p$. It uses $glgl$ paths which jump to the destination cabinet on the first hop.

In the next algorithm the statement $(*,*,*)(0,3;\gamma,\pi,\delta)$ means that all $(c,d,p)$ simultaneously send a message with packet header $(0,3;\gamma,\pi,\delta)$. The statement $(*,*,*)(0,1;0,0,0)$ means that a one hop delay is taken.

%Theorem 7
\begin{theorem}
Assume  $M \geq 4$. An all-to-all exchange can be performed in $KM^2$ rounds with $KM$ intra-round delays.
\end{theorem}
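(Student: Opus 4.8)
The plan is to realize the all-to-all as an exhaustive sweep over the $KM^2$ source vectors. In round $t$ every router $(c,d,p)$ launches the header $(0,3;\gamma_t,\pi_t,\delta_t)$, which by the routing rule deposits its packet at $(c+\gamma_t,\,p+\delta_t,\,d+\pi_t)$. First I would record the routing bijection: for an ordered pair consisting of source $(c,d,p)$ and target $(c',d',p')$, the unique vector carrying the one to the other is $(\gamma,\pi,\delta)=(c'-c,\,p'-d,\,d'-p)$. Hence as $(\gamma,\pi,\delta)$ ranges over all $KM^2$ values, each router sends exactly one packet to every router (itself included). The sweep therefore realizes the all-to-all and consumes exactly $KM^2$ rounds; only the intra-round delays incurred by pipelining remain to be counted.

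Next I would establish that a single round is link-disjoint, so that there are no inter-round conflicts. For a fixed vector the three hops are: hop~$1$, the rotation $p\mapsto p+\delta$ inside every drawer; hop~$2$, the single global port $\gamma$ applied at every router; hop~$3$, the rotation $d\mapsto d+\pi$ inside every destination drawer. Each hop is a bijection of the routers that loads every directed link at most once, so no link carries two packets of the same round. This is exactly the parallel-paths content of Theorem~\ref{thm:conflict}: two sources sharing a drawer are sent to target drawers whose drawer coordinate is $p+\delta$ and $p'+\delta$, which differ, so the hypothesis $(c',d')=(\gamma',\delta')$ of that theorem fails and no global conflict arises.

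The substantive work is the pipelining. I would run Protocol~$1$, launching a new round each step. Since the local and global networks are independent and only one round occupies the global network at any step, the sole possible collision is local--local, between the first hop of round $t$ (a rotation by $\delta_t$) and the third hop of round $t-2$ (a rotation by $\pi_{t-2}$), both acting on the same $K_M$ in every drawer at once. The key lemma is that two rotations by $s$ and $s'$ on $K_M$ share a \emph{directed} edge iff $s=s'$, whereas the case $s'=-s$ traverses shared undirected edges in opposite directions and is harmless because links are bidirectional; thus round $t$ conflicts with round $t-2$ precisely when $\delta_t=\pi_{t-2}$ and this common value is nonzero. I would then order the vectors with $\pi$ innermost, then $\delta$, then $\gamma$, so that $\pi_t=t\bmod M$ cycles every step while $\delta_t$ is constant across each block of $M$ consecutive rounds. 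In each block $\delta_t=\pi_{t-2}$ holds for exactly one $t$, genuine only when $\delta_t\neq 0$, giving $K(M-1)\le KM$ real conflicts; at each I insert one idle step before round $t$, which re-pairs its first hop with round $t-1$'s third hop, whose amount $\pi_{t-1}=\pi_{t-2}+1\neq\delta_t$, so the delay clears the conflict without cascading and normal $t$-versus-$(t{-}2)$ pairing resumes immediately after. The total is $KM^2$ rounds plus at most $KM$ delays.

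The main obstacle is this last step: getting the schedule to close up. Exhibiting \emph{some} conflict is easy, but one must (i) choose the traversal order so that a single one-step delay genuinely removes each conflict rather than merely shifting it onto the next round, and (ii) verify that the inserted idle steps create no new local collisions and never force two global hops into the same step. The rotation-edge lemma together with the choice of $\pi$ as the fast index is what makes the delay non-cascading. I expect the hypotheses that $M$ is even and $M\ge 4$ to be needed exactly here, to dispose cleanly of the degenerate rotation amounts ($\delta=0$ and the self-paired $M/2$) and of the block-boundary rounds.
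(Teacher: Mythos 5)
Your proposal is correct and follows essentially the same route as the paper's proof: an exhaustive sweep over the $KM^2$ source vectors ordered with $\pi$ innermost, within-round link-disjointness from the swap, and a one-step delay inserted exactly when $\delta_t=\pi_{t-2}$, which is the paper's test $\pi(i)-2\equiv\delta(i)\pmod M$ and yields the same bound of at most $KM$ delays. Your extra care about directed versus undirected rotation edges and the $\delta=0$ case merely sharpens the count to $K(M-1)$.
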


\proof
\begin{alignat*}{4}
\textrm{Let }i=\pi&+\delta M+\gamma M^2&&\\
&\textrm{For }i=0,\cdots,KM^2-1\\
&\phantom{MM}\textrm{test }=\pi(i)-2 \bmod M\\
&\phantom{MM}\textrm{if test } =\delta(i)\\
&\phantom{MMMM} (*,*,*)(0,1;0,0,0)\\
&\phantom{MMMM}(*,*,*)(0,3;\gamma(i),\pi(i),\delta(i))&&&\\
&\phantom{MM}\textrm{else}\\
&\phantom{MMMM}(*,*,*)(0,3;\gamma(i),\pi(i),\delta(i))&&&\\
&\phantom{MM}\textrm{end if}&&&\\
&\textrm{end for}&&&\\
\end{alignat*}

This algorithm implements Protocol 1 simultaneously at every router of $D3(K,M)$. 
Each router $(c,d,p)$ sends its message $i$ at round $i$ on the path 
\begin{equation*}
\begin{array}{rcc}
&router &  header\\
& (c ,  d, p)&(0,3;\gamma,\pi,\delta)\\
l_1 & \downarrow & \downarrow\\
&(c ,d,p+\delta)&(0,2; \gamma ,\pi, \delta)\\
g & \downarrow &\downarrow\\
&(c + \gamma  ,p+\delta,d)&(0,1;\gamma, \pi),\delta\\ 
l_2 & \downarrow & \downarrow\\
&(c + \gamma  ,p+\delta,d+\pi)&(0,0;\gamma,\pi,\delta).
\end{array}
\end{equation*}

There are no link conflicts within round $i$ because router  $(c,d,p)$ communicates with drawer $(c+\gamma,p+\delta)$ and this is in a different drawer for each $p$. There are no link conflicts between round $i$ and round $i+1$ because local and global links can be traversed independently and simultaneously.
However, there can be a link conflict between round $i$ and round $i+2$. In Protocol $1$, round $i$ and $i+2$ both use a local port at the same time. If $\pi(i) =\delta(i+2)$, a local link conflict occurs because all $(c,d,p)$ are acting in unison. The test in the algorithm imposes a one-step delay on round $i+2$ which prevents the conflict. This delay occurs $KM$ times.
\qed

The one-to-all and all-to-one algorithms are presented using Protocol $1$ interrupted by delays when link conflicts occur. Each can use Protocol $2$ or $3$ to eliminate link conflicts. These Protocols increase the total hop count for the all-to-all algorithm to $3KM^2/2$ and $2KM^2$, respectively. 

A \emph{permutation}, $\pi$, is a collection of $KM^2$ messages for which 
$$
(c,d,p) \rightarrow \pi(c,d,p) =(e,f,d)
$$ 
is a permutation of the routers of $D3(K,M)$.

\begin{theorem} %Theorem 8
A permutation can be performed in less than or equal to $M+4$ hops.
\end{theorem}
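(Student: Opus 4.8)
The plan is to route every packet of the permutation along its unique minimal $lgl$ path and to dispose of the single source of contention by time-slicing the permutation into $M$ conflict-free rounds. Each message carries the source-vector header $(3;c'-c,p'-d,d'-p)$, which takes $(c,d,p)$ to its image in three hops; the waypoint reached after the global hop is $(c',d',d)$, exactly the form $(e,f,d)$ named in the statement. If all $KM^2$ packets could be launched simultaneously with no link conflict, the permutation would finish in three hops, so the whole task reduces to understanding, and then eliminating, the conflicts among these minimal paths.

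First I would invoke Theorem~\ref{thm:conflict}: among distinct minimal paths the only link conflict is a \emph{global} one, and it occurs precisely when two packets share both their source drawer $(c,d)$ and their destination drawer $(c',d')$, because the swap places exactly one global link between any ordered pair of drawers. Hence local hops between distinct routers never collide, and the sole obstruction is that the one global link joining two drawers may be demanded by as many as $M$ packets at once, namely when all $M$ routers of one drawer address the $M$ routers of another. This $M$-fold worst case is what forces the $M$ into the bound and matches the headline figure in \S9.

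Next I would schedule around this bottleneck. Form the drawer-traffic matrix $T$ on the $KM$ drawers, with $T_{AB}$ the number of packets sent from drawer $A$ to drawer $B$. Since the map is a bijection and each drawer holds $M$ routers, every row and column of $T$ sums to $M$, so $T/M$ is doubly stochastic. By the Birkhoff--von Neumann decomposition (equivalently, a K\"onig edge-colouring of the associated $M$-regular bipartite multigraph) I may write $T=P_1+\cdots+P_M$ with each $P_k$ a permutation matrix. Routing the packets of $P_k$ in round $k$ then sends exactly one packet out of, and into, each drawer; consequently every global link carries at most one packet and each drawer performs at most one outgoing and one incoming local hop. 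By the conflict criterion this round is completely link-free and completes in three hops.

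Finally I would pipeline the $M$ rounds with Protocol~$1$: successive rounds overlap on the independent local and global phases, so the $M$ rounds fill and drain in $M+2$ steps. The part I expect to be the real work is the interaction between rounds two apart, which, exactly as in the all-to-all analysis, can both demand a local link at the same step; I would resolve these by a last-in--first-out rule or a single inserted delay and bound the total slack by a small constant, yielding the claimed $M+4$. The main obstacle is therefore twofold: establishing the $M$-round matching decomposition, which reduces permutation routing to $M$ conflict-free partial permutations, and verifying that pipelining costs only the additive constant rather than doubling the round count.
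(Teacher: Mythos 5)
Your route is genuinely different from the paper's. The paper does not time-slice the permutation at all: it spends one local hop having every router in a drawer share its destination with its drawermates, and then resolves the drawer-to-drawer contention identified by Theorem~2 by switching the conflicting packets to four-hop $glgl$ paths, whose initial global hop spreads a drawer across a column and hence across $M$ distinct drawers; the bound $M+4$ arises as $1+4$ hops plus up to $M-1$ serialization delays when $glgl$ paths from different drawer pairs still collide. You instead keep every packet on its minimal $lgl$ path and eliminate the contention combinatorially, via the K\"onig/Birkhoff decomposition of the $M$-regular drawer-traffic multigraph into $M$ perfect matchings, so that each round is conflict-free by Theorem~2. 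Your decomposition step is cleaner and more rigorous than anything in the paper's own argument (the paper's claim that ``at most $M$ such conflicts can occur'' is asserted, not proved), and it buys a genuinely conflict-free schedule. What the paper's approach buys is locality: each router only needs one hop of information exchange within its own drawer, whereas your schedule requires someone to know the entire permutation, compute the edge colouring, and distribute round assignments to every router before anything is launched -- a nontrivial precomputation that itself costs communication and is not accounted for in your hop count.

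The one substantive gap is the final pipelining constant. You concede that rounds $i$ and $i+2$ can contend for a local link under Protocol~1 and wave at a LIFO rule or ``a single inserted delay'' with total slack ``a small constant,'' but this is exactly where the all-to-all proof has to work: there the delay test $\pi(i)-2\equiv\delta(i+2)$ can be evaluated once because all routers use the \emph{same} header in a round, and even so the delays accumulate to $KM$, not to a constant. In your setting different routers carry different headers within a round, each of the $M-2$ overlapping steps can force a whole-round delay (a delay at one router must delay the entire round if the lockstep conflict-freeness argument is to survive), and nothing you have said bounds the number of delayed rounds by $2$. As written your argument supports $M+2$ hops in the best case and something like $2M$ in the worst, so the claimed $M+4$ is not established; you would need either to show the matchings can be ordered so that consecutive-but-one rounds never share a local link, or to adopt the paper's LIFO observation and prove that a delayed final hop cannot cascade.
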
 
\proof
The nodes attached to routers in a drawer send their destination to all other routers in the drawer in one hop. Three hop paths only conflict if they are drawer to drawer. If only one pair of drawers is involved $glgl$ paths prevent conflict. But if more than one pair of drawers is involved, $glgl$ paths from one pair can conflict with $glgl$ paths from another pair. Because this is a permutation, at most $M$ such conflicts can occur. Therefore it is better to use $glgl$ paths when dealing with drawer to drawer conflicts. If the $glgl$ paths do not interfere with each other, the permutation takes only $5$ hops, if they do interfere with each other the permutation may take $M+4$ hops.
\qed

\begin{center}
	\begin{flushleft}
		\begin{figure*}
			\includegraphics{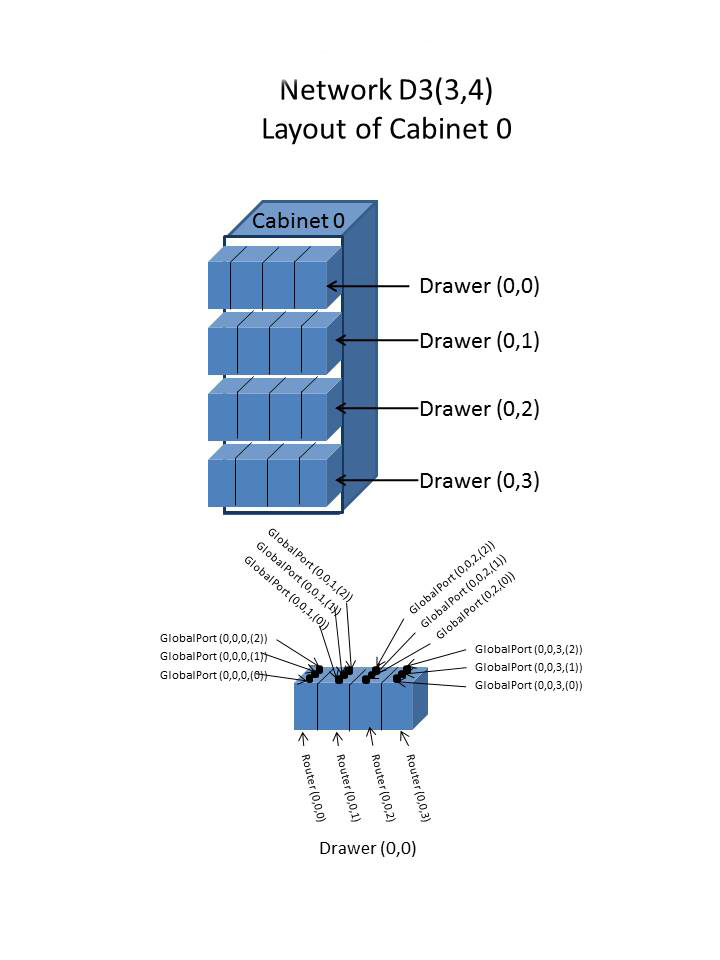}
		\end{figure*}
	\end{flushleft}
\end{center}
	
\begin{center}
	\begin{flushleft}
		\begin{figure*}
			\includegraphics{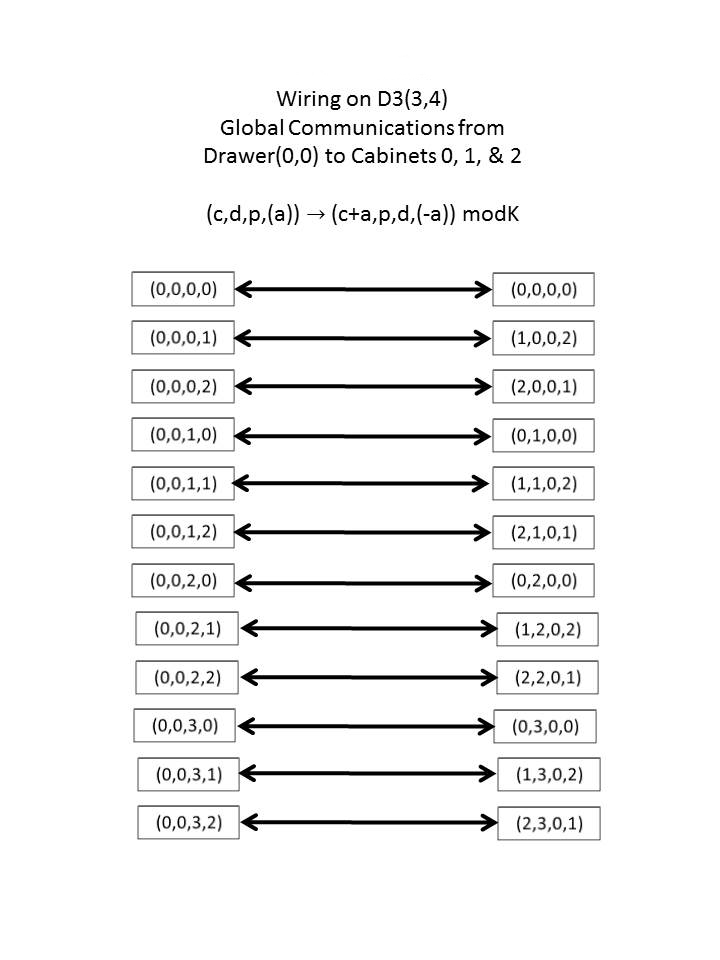}
		\end{figure*}
	\end{flushleft}	
\end{center}

\begin{center}
		\begin{figure*}
\hspace{1.4in}
		\includegraphics{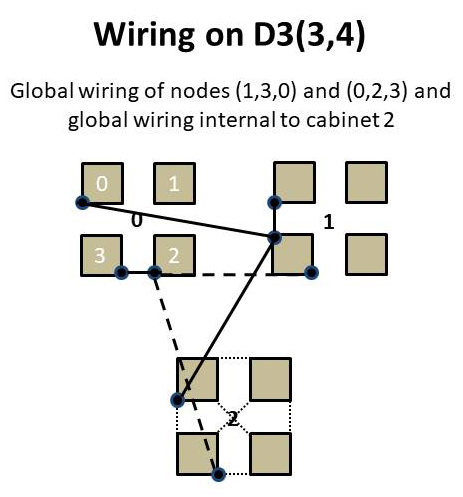}
		\end{figure*}
\end{center}
\end{document}